\title{Support Optimality and Adaptive Cuckoo Filters\thanks{This work was supported in part by ISF grants no. 1278/16 and 1926/19, by a BSF grant no. 2018364, and by an ERC grant MPM under the EU’s Horizon 2020 Research and Innovation Programme (grant no. 683064).  A preliminary version of this paper appeared at the 17th Algorithms and Data Structures Symposium.}}
\author{Tsvi Kopelowitz\thanks{Bar-Ilan University, Ramat Gan, Israel. Email:~\protect\url{kopelot@gmail.com}}
	\and
Samuel McCauley\thanks{Williams College, Williamstown, MA 01267 USA. Email:~\protect\url{srm2@williams.edu}}
\and
Ely Porat\thanks{Bar-Ilan University, Ramat Gan, Israel. Email:~\protect\url{porately@cs.biu.ac.il}}
}
\date{}
\newtheorem{theorem}        {Theorem}
\newtheorem{lemma}[theorem] {Lemma}
\newtheorem{definition}[theorem]{Definition}
\newcommand{\defn}[1]{\textbf{\emph{#1}}}
\newcommand{\E}{\mathbb{E}}
\newcommand{\present}{\texttt{present}}
\newcommand{\absent}{\texttt{absent}}
\newcommand{\F}{\mathcal F}
\newcommand{\samnote}[1]{\textcolor{blue}{$\ll$\textsf{#1 --Sam}$\gg$\marginpar{\tiny\bf SM}}}
\newcommand{\sam}{\samnote}
\newcommand{\ouradaptive}{support optimal}
\newcommand{\ouradaptiveadj}{support-optimal}
\newif\iffull
\begin{document}
\maketitle

\begin{abstract}
  Filters (such as Bloom Filters) are a fundamental data structure that speed up network routing and measurement operations by storing a compressed representation of a set.
  Filters are very space efficient, but can make bounded one-sided errors: with tunable probability $\epsilon$, they may report that a query element is stored in the filter when it is not.  This is called a \defn{false positive}.
  Recent research has focused on designing methods for \defn{dynamically adapting} filters to false positives, thereby reducing the number of false positives when some elements are queried repeatedly.

  Ideally, an adaptive filter would incur a false positive with bounded
  probability $\epsilon$ for each new query element, and 
  would incur $o(\epsilon)$ total false positives over all repeated queries to
  that element.  
  We call such a filter
  \defn{\ouradaptive{}}.

  In this paper we design a new Adaptive Cuckoo Filter, and show that it is \ouradaptive{} (up to additive logarithmic terms) over any $n$ queries when storing a set of size $n$.  
  \iffull
  Our filter is very simple: fixing previous false positives requires a simple cuckoo-like operation, and the filter does not need to store any additional metadata.  This data structure is the first practical data structure that is \ouradaptive{}, and the first \ouradaptive{} filter that does not require additional space beyond a normal cuckoo filter.
  \fi

  We complement these bounds with experiments that show that our data structure is effective at fixing false positives on network trace datasets, outperforming previous Adaptive Cuckoo Filters.

  Finally, we investigate adversarial adaptivity, a stronger notion of adaptivity in which 
  an adaptive adversary repeatedly queries the filter, using the result of previous queries to drive the false positive rate as high as possible.
  We prove a lower bound showing that a broad family of filters, including all known Adaptive Cuckoo Filters, can be forced by such an adversary to incur a large number of false positives.

\end{abstract}

\section{Introduction}
\label{sec:intro}

A \defn{filter} is a data structure that supports membership queries for a set of elements $S = x_1,\ldots x_n$ from a universe $U$.  The answer to each filter query is \present{} or \absent{}.
Typically, a filter has a \defn{correctness} guarantee: if an element $q\in S$, the filter must return \present{} to the query with probability 1.
There is also a \defn{performance} guarantee: if an element $q\notin S$, the filter must return \present{} with tunable probability at most $\epsilon$.  
If a query on an element $q\notin S$ returns \present{} then $q$ is called a \defn{false positive}.
Typically, filters use a small amount of space.

A filter's small size means that the filter can be stored in an efficiently accessible location.
Meanwhile, the no-false-negative guarantee implies that if the filter returns $q\notin S$ for a query $q$, then there is no need for accessing the actual data, which is typically stored in a medium with expensive access cost.
This ability to filter out queries to items not in $S$ in a small-size structure has found a wide variety of 
network applications such as collaboration in peer-to-peer networks, resource routing, packet routing, and measurement infrastructures~\cite{BroderMi04} as well as many areas of network security~\cite{GeravandAh13}.

There are several different kinds of filters.
The Bloom filter~\cite{Bloom70} was the first filter data structure to be designed; 
it is still very popular due to its simplicity and efficiency.
Later filters were designed to provide better worst-case lookup times and space guarantees~\cite{BenderFaGo18,PaghPaRa05,Porat09}, improved practical performance~\cite{FanAnKa14,WangZhSh19}, and improved cache performance~\cite{BenderFaJo12}.

In this paper, we focus on filters that achieve space very close to the optimal lower bound of $n\log 1/\epsilon$ bits~\cite{CarterFlGi78,LovettPorat10}, and that store elements from a large universe $|U| \gg n$.

\paragraph{Fixing False Positives}

A well-known issue with many existing filters is that they cannot \defn{adapt} to queries: if a query $q\notin S$ is a false positive, all subsequent queries $q' = q$ will be false positives.  
The focus of this paper is designing filters that do adapt to false positive queries, so that if a query $q$ is a false positive, the filter undergoes structural changes so that a later query to $q$ is unlikely to be a false positive.  
An element $q$ is said to be \defn{fixed} if $q$ was previously a false positive, but is no longer a false positive.  Similarly, $q$ is \defn{broken} if $q$ was previously fixed, but is now a false positive.

\paragraph{Related Work.}
Bender et al.~\cite{BenderFaGo18}  analyzed how to fix false positives against an adversary.
They give a data structure such that if queries are generated by an adversary trying to maximize the false positive rate, each query to a filter is a false positive with probability at most $\epsilon$, even if the query element was queried before.  
This requirement essentially provides concentration bounds: over $n$ queries, their filter incurs $\epsilon n$ false positives, even if the queries are maliciously chosen based on previous false positives.

However, the benefit of adaptivity goes beyond resisting an adversary.  As shown experimentally by Mitzenmacher et al.~\cite{MitzenmacherPo17}, adapting to queries can significantly decrease the number of false positives---in fact, if queries are repeated sufficiently frequently, the performance can be much better than $O(\epsilon n)$.  In particular, network trace data consists of a structured sequence of queries---can we give a data structure that performs particularly well on this kind of data?

Most recently, Bender et al.~\cite{BenderDaFa21} compared adaptivity to cache-based strategies, finding that adaptivity leads to significantly better practical performance.

\paragraph{Support Optimality}
Ideally, an adaptive filter would incur a false positive with probability $\epsilon$ for each new query, and incur no further queries asymptotically.   Thus, every new false positive is fixed, and this fixing is unlikely to break previously-fixed false positives.
In particular, let $q_1,\ldots q_n$ be a predetermined sequence of queries%
\footnote{Note that the filter does not have access to this sequence ahead of time; it must process the queries online.  We fix a predetermined sequence of queries to clarify that, unlike in~\cite{BenderFaGo18}, we do not allow each new query to be determined adversarially based on the result of previous queries.}
 to a filter $\mathcal{F}$, and let $Q = \bigcup_{i = 1}^n \{q_i\}$ be the set of unique queries in the sequence.  We say that $\mathcal{F}$ is \defn{\ouradaptive{}} if the expected number of false positives when querying $q_1,\ldots q_n$ is $\epsilon|Q|(1 + o(1))$.  
In this paper we give a \ouradaptiveadj{} filter up to additive polylogarithmic terms, and show that it significantly improves practical performance.

\iffull
\paragraph{Avoiding Remote Memory Accesses}
Adaptive filters focus on the case where a filter is used to avoid remote memory accesses.  Oftentimes, network data is stored on large, slow storage.  In this case, if the filter is small enough to fit in faster storage, it can be used to rule out queries to elements not in the set. Thus, when the filter answers \texttt{absent}, a slow query to the large storage is avoided, greatly improving throughput.  

This use case is important because it allows us to assume (costly) access to the original set of items being held in the filter.  Bender et al.~\cite{BenderFaGo18} gave a lower bound showing that this access is necessary to achieve adaptivity, and indeed this assumption was used both in their data structre as well as in the Adaptive Cuckoo Filter~\cite{MitzenmacherPo17}.

In particular, we assume that we can make $O(1)$ expected accesses to $S$ on each insert, as well as each time the filter answers \texttt{present}.  This does not asymptotically increase the number of accesses to slow storage.\footnote{In fact, if $S$ is stored to allow for efficient, in-place reverse hash lookups---so that the normal hash lookup and the adaptive rehashing can be performed simultaneously---then some false positives may be fixed without increasing the number of remote accesses at all.}  We use this assumption to ``cuckoo'' elements on a false positive, rehashing them and swapping them to another slot.  All previous adaptive cuckoo filters also require rehashing elements of $S$ on a false positive~\cite{MitzenmacherPo17,BenderFaGo18}.
\fi

\subsection{Results}

We discuss three data structures in this paper: two versions of the Adaptive Cuckoo Filter originally presented in~\cite{MitzenmacherPo17} (which we call the \defn{Cyclic ACF} and \defn{Swapping ACF}), and a Cuckoo Filter augmented with a new method of achieving adaptivity, which we call the \defn{Cuckooing ACF}.

The first contribution of this paper is the Cuckooing ACF, a \ouradaptiveadj{} filter which can be implemented using almost-trivial changes to current Cuckoo Filter implementations.  

In Section~\ref{sec:upper}, we analyze the Cuckooing ACF and prove that it is \ouradaptive{} over any $n$ queries, up to additive polylogarithmic terms.  
This gives a significant performance improvement over previous filters even for large $Q$, and the difference becomes more dramatic for small $Q$.   For example, for the case of a repeated single query ($|Q| = 1$), 
static filters incur $\epsilon n$ false positives in expectation, whereas we show that the Cuckooing ACF incurs $O(\log^4 n)$  expected false positives.  

We show that despite their strong practical performance, the Cyclic ACF and Swapping ACF are not \ouradaptive{}--even if there are a constant number of queries ($|Q| = O(1)$), they may incur $\Omega(n)$ false positives, whereas the Cuckooing ACF incurs at most $O(\log^4 n)$.  Thus, from the standpoint of \ouradaptive{}ity, cuckooing is a better method for achieving adaptivity.

In Section~\ref{sec:experiments}, we provide experimental results that show that the theory bears out in practice: the Cuckooing ACF attains a low false positive rate on network trace datasets, which contain many repeated queries.  The performance is not only stronger than a vanilla Cuckoo Filter, but also improves upon the performance of a Cyclic ACF and a Swapping ACF of the same size.  This shows that the Cuckooing ACF is effective at fixing false positives in a practical sense. These results also emphasize the benefit of a simple adaptive filter: not only is the resulting data structure easier to implement, the simplicity entails less space usage compared to previous Adaptive Cuckoo Filters, leading to a significant performance improvement. 

Finally, in Section~\ref{sec:lower}, we prove lower bounds that demonstrate that a broad family of filters cannot be adaptive in the adversarial sense of Bender et al.~\cite{BenderFaGo18}; this includes the Cyclic ACF, the Swapping ACF, and the Cuckooing ACF.  
This lower bound motivates the concept of \ouradaptive{}ity: a \ouradaptive{} filter achieves strong performance on real datasets without achieving adversarial adaptivity.  
Our proof also gives insight into the structure of adaptive filters---specifically, it shows that a space-efficient filter must have variable-sized fingerprints in order to be adversarially adaptive.

\iffull
\subsection{Related Work}
Mitzenmacher et al.~\cite{MitzenmacherPo17} were the first to describe the notion of adaptivity.  They show that Adaptive Cuckoo Filters can significantly improve the number of false positives incurred on real-world network trace datasets.  We replicate their experiments in Section~\ref{sec:experiments}, and show that the Cuckooing ACF improves upon their performance.  They also give theoretical bounds that guarantee that a small number of false positives are incurred by their data structures if queries are selected at random.  Their work briefly mentions the swapping strategy used in this paper (in the Cuckooing ACF), but does not analyze it.

Bender et al.~\cite{BenderFaGo18} give a filter (achieving optimal space and optimal query time) that is adaptive against an adversary.  Their filter works by adding a constant number of bits to the filter on average for each false positive encountered.  This data structure gives strong bounds, but dynamically changing the number of bits for each stored element is likely to be an obstacle to a practical implementation.  Bender et al. also give a lower bound showing that access to an external dictionary storing $S$ is necessary to achieve adaptivity; this helps motivate such accesses in our paper.
\fi

\iffull
The Bloomier filter of Chazelle et al.~\cite{ChazelleKiRu04} and the Weighted Bloom Filter of Bruck et al.~\cite{BruckGaJi06} also aim to reduce the false positive rate; however, both assume that frequent queries are known to the filter in advance.
\fi

\section{Three Adaptive Cuckoo Filters}
\label{sec:filters}

In this section describe a new kind of filter, the Cuckooing ACF\@.  
We then discuss the Cyclic ACF and the Swapping ACF, both originally introduced in~\cite{MitzenmacherPo17}. 

We include a symbol table for reference in Appendix~\ref{sec:symboltable}.

\subsection{ACF Parameters and Internal State}%
\label{sec:ACFprelim}

We begin by defining a more general data structure which we call the \defn{adaptive cuckoo filter} (ACF). As the name suggests, the Cyclic ACF, the Swapping ACF, and the Cuckooing ACF are adaptive cuckoo filters.

An ACF $\F$ has integer parameters $f, k, b > 0$, an additional parameter $\gamma > 1$, and supports storing $n$ elements from a universe $U$ with a false positive rate $\epsilon$.  
The internal representation of a filter $\F$ consists of $k$ hash tables, each of $N=\gamma n/bk$  bins,\footnote{We assume $\gamma n$ is an integer multiple of $bk$ for simplicity.} where each bin consists of $b$ slots of $f$ bits; 
thus, the space usage of $\F$ is $N\cdot b\cdot f\cdot k$ bits. 
The parameter $\gamma$ determines how densely elements are packed, trading off between insert time and space; often $\gamma \approx 1.05$ is used.

The hash tables are accessed using $k+1$ hash functions: 
 $k$ \defn{location hash} functions $h^{\ell}_1,\ldots, h^{\ell}_k: U \rightarrow \{0,\ldots, N-1\}$ that hash from $U$ to a bit string of length $\log N$,\footnote{When treating the hash value as a bit string we assume that $N$ is a power of two for simplicity; this assumption is not necessary for the implementation.}  and a single \defn{fingerprint hash} $h^f$ mapping each $x\in U$ to an $f$-bit \defn{fingerprint}.  
 The range and domain of $h^f$ depend on which ACF is used
 and may depend on the internal state of $\F$; we provide details below.
Following previous results on filters~\cite{Bloom70,BenderFaGo18,BenderFaJo12,FanAnKa14,Eppstein16,MitzenmacherPo17}, this paper assumes free access to uniform random hash functions.\footnote{While such strong hashes are not useable in practice, this analysis is generally predictive of experimental results (see i.e.~\cite{FanAnKa14,PandeyBeJo17,MitzenmacherPo17}).}

When a set $S$ is stored in $\F$, for each element $x_i\in S$, the fingerprint of $x_i$ is stored in one of the slots of bin $B(x_i)$ in the $\beta_i$th hash table; this bin is defined using a location hash: $B(x_i) =h_{\beta_i}^{\ell}(x)$ for some integer $0\le \beta_i < k$.
We say a slot $\sigma$ is \defn{occupied} if the fingerprint of some $x_i\in S$ is stored in $\sigma$; otherwise $\sigma$ is \defn{empty}.
We call $\beta_i$ the \defn{hash index} of $x_i$. 

Since an ACF stores each element using a hash index, we can keep track of the internal state of a filter using the hash index of each element.  
Thus,  we use $C= (C[1],C[2],\ldots,C[n]) = (\beta_{1}, \ldots, \beta_{n})$ to define the \defn{configuration} of $\F$.
This fully defines the internal representation of a Cuckooing ACF.  The internal representation of a Cyclic ACF also depends on $s$ metadata bits stored for each element, and the internal representation of a Swapping ACF also depends on which slot within the bin is used to store each element.

Suppose $S$ is stored using hash indices $\beta_1,\ldots \beta_n$.
Then  query $q\notin S$ \defn{collides} with an element $x_i\in S$ under configuration $C$ when
$h^{\ell}_{\beta_i}(x_i) = h^{\ell}_{\beta_i}(q)$ and
$q$ and $x_i$ have the same fingerprint.

 \subsection{Cuckoo Filter Operations}
 \label{sec:ACFoperations}

 We begin by describing how inserts and queries work for an ACF.  The Cuckooing ACF and Swapping ACF insert and query using these methods; the Cyclic ACF uses a generalization of these methods.

\paragraph{Insert.}
Suppose an element $x_{i}$ is inserted into a set $S$ of size $i-1$ currently stored with filter $\mathcal{F}$ in configuration $C$, where elements $S = x_1,\ldots x_{i-1}$ have hash indices $\beta_1,\ldots \beta_{i-1}$.  Assume that $\mathcal{F}$ can store up to $n \geq i$ elements.
The insertion algorithm finds a valid configuration $C'$ of $\F$ on $S$ such that there exists a hash index $\beta'_{i}\in \{0,\ldots k-1\}$ for which bin $h^{\ell}_{\beta_{i}'}$ has an empty slot. 
This may involve updating the hash indices of other elements; for $1\leq j < i$ let $\beta'_j$ be the hash index of $x_j'$ under $C'$.
We describe how to determine $C'$ below.

If there is already an available empty slot, the filter stores the element immediately in that slot.
Specifically, if there exists a $\beta\in\{0,\ldots,k-1\}$ where bin $h^{\ell}_{\beta}(x_i)$ in hash table $\beta$ contains an empty slot, the filter sets $\beta'_{i} = \beta$, and stores the fingerprint of $x_i$ in the empty slot. All other slots remain unchanged: $\beta_j' = \beta_j$ for all $1\leq j < i$.

Now, consider the case where there is no available empty slot.  Then the ACF makes room by shifting elements as one would in cuckoo hashing~\cite{PaghRodler04}.
The filter selects a hash index $\beta_{i}$ arbitrarily from $\{0,\ldots,k-1\}$.
Since all slots in bin $h^{\ell}_{\beta_{i}'}(x_{i})$ are occupied in $C$, the filter \defn{moves} the fingerprint of some element $x_j$ stored in a slot in $h^{\ell}_{\beta_{i}'}(x_j) = h^{\ell}_{\beta_{i}'}(x_i)$, leaving an empty slot in which $x_i$ can be stored.
If $h_\beta^{\ell}(x_j)$ contains an empty slot for some $\beta\in\{0,\ldots,k-1\}$
(i.e. if $x_j$ can be stored in an empty slot), one such empty slot is arbitrarily selected to store $x_j$.
Otherwise, the filter increments $\beta'_{j} = \beta_{j} + 1\pmod k$ and recurses, moving an element stored in $h^{\ell}_{\beta'_{j}}(x_j)$ as necessary.  

The move the elements as described above, the ACF must be able to access the set $S$ during an insert in order to rehash each $x_j$.
We follow all past work on adaptive filters~\cite{BenderFaGo18,MitzenmacherPo17,BenderDaFa21} in assuming that an external dictionary can be accessed, enabling an element to be rehashed while inserting or fixing.

 If this recursive process takes too many steps (more than $\Theta(\log n)$ elements are moved), the filter chooses new hashes and is rebuilt from scratch.
 If $\mathcal{F}$ uses $N = O(n)$ hash slots, then over $n$ inserts, the probability of a rebuild is $O(1/n)$~\cite{PaghRodler04}.

\iffull
There has been a great deal of previous work on improving the above cuckoo hashing procedure with various methods for finding the best slots to store each element, see i.e. \cite{KirschMiWi09,FotakisPaSa05,FriezeMeMi09,ArbitmanNaSe09,NaorSeWi08}.

Notice that the procedure we describe does not use ``partial-key'' cuckoo hashing, as in the original cuckoo filter of Fan et al.~\cite{FanAnKa14}.  The ability to access $S$ (necessary for adaptivity~\cite{BenderFaGo18}) means that elements can be moved based on an entirely new hash.
\fi

\paragraph{Query.}
On a query $q$, a filter $\F$
in configuration $C$ returns \present{} 
if there exists a $\beta$ 
and a slot index $\sigma \in \{1,\ldots b\}$ 
such that slot $\sigma$ in bin $h_{\beta}^\ell(q)$ of table $\beta$ is occupied and stores the fingerprint of $q$.
This immediately guarantees correctness of the filter (queries to $x_i\in S$ always return \present{}) and, via a union bound over the elements of $S$, a false positive rate of at most $n/(N2^f)$.
  The filter achieves a desired false positive rate $\epsilon$ by setting $f = \log (n/(N\epsilon)) = \log (bk/\epsilon\gamma)$.

\paragraph{Fixing false positives.}  If an ACF returns \present{} on a false positive query $q$ (the filter knows that $q\notin S$ from the external dictionary storing $S$), the ACF modifies its configuration 
to attempt to fix $q$, 
so that subsequent queries to $q$ return \absent{}.  Each type of ACF has its own method for fixing false positives, which we describe below.
Notice that the process of modifying the configuration may cause some query $q'\notin S$ to become a false positive, even if $q'$ was fixed some time in the past.

\subsection{Cuckooing ACF}
\label{sec:cuckooingACF}

The primary data structure contribution of this paper is the \defn{Cuckooing ACF}.  This data structure is a standard Cuckoo Filter~\cite{FanAnKa14} with an added operation to fix false positives; inserts and queries work exactly as described in Section~\ref{sec:ACFoperations}.

Let $q$ be a false positive under configuration $C$; we define how the Cuckooing ACF finds a new configuration $C'$ with hash indices $\beta_1', \ldots \beta_n'$ to attempt to fix $q$.
For each $x_i\in S$ that collides with $q$ under $C$, the filter moves $x_i$ recursively as it would during an insert.  Specifically, the filter 
sets the new hash index $\beta'_i = \beta_i + 1\pmod k$; 
if bin $h^\ell_{\beta'_i}(x_i)$ in table $\beta'_i$ does not contain an empty slot, 
an element $x_j$ stored in $h^{\ell}_{\beta'_i}(x_i)$ under $C$ is moved recursively. If more than $\Omega(\log n)$ steps are taken, the filter is rebuilt. Standard Cuckoo Hashing analysis shows that for any insert on a Cuckooing ACF with $\gamma = 1+\Omega(1)$ a rebuild occurs with probability $O(1/n^2)$~\cite{PaghRodler04}.

\subsection{Cyclic ACF}
\label{sec:cyclicACF}
The \defn{cyclic ACF} of Mitzenmacher et al.~\cite{MitzenmacherPo17} is an ACF where each slot contains $s$ additional \defn{hash selector} bits. The cyclic ACF generally has $b=1$; thus, the total space used by a Cyclic ACF is $kN(f + s)$.  Usually, $s$ is a small constant.

In the Cyclic ACF, the fingerprint hash maps $U\times \{0,\ldots, 2^s - 1\}\rightarrow \{0,\ldots, 2^f - 1\}$.  In particular, the hash selector bits are used to determine the fingerprint of an element stored in a given slot.

When an element $x_i$ is initially inserted,
the insertion process continues as in Section~\ref{sec:ACFoperations}, with fingerprint $h^f(x_i,0)$.   When an empty slot $\sigma$ is found that can store $x_i$, the hash selector bits of $\sigma$ are set to $0$, and $h^f(x_i,0)$ is stored in $\sigma$.

To query an element $q$, for each location hash $h^{\ell}_{\beta}$, with $\beta\in \{0,\ldots, k-1\}$, the filter looks at the slot $h^{\ell}_{\beta}$ of table $\beta$.  The $s$ hash selector bits stored in the slot contain a value $0 \leq \alpha \leq 2^s - 1$.  The filter compares $h^f(q,\alpha)$ with the fingerprint stored in the slot; the filter returns \present{} if they are equal.  Otherwise the filter increments  $\beta$ and repeats.  If no collisions are found for all $0\leq\beta\leq k-1$, the filter returns \absent{}.

If a query $q$ is a false positive, the Cyclic ACF fixes the query as follows.  Let $x_i$ be the element that collides with $q$, let $\sigma$ be the slot storing $x_i$, and let $\alpha$ be the value of the $s$ hash selector bits stored in $\sigma$.  Then the filter sets the hash selector bits of $\sigma$ to store value $\alpha + 1$, and stores $h^f(x_i,\alpha + 1)$ in $\sigma$.  If multiple $x\in S$ collide with $q$, this procedure is repeated for each such $x$.

\subsection{Swapping ACF}

The idea of the Swapping ACF (of Mitzenmacher et al.~\cite{MitzenmacherPo17}) is to have elements hash to bins with $b > 1$ slots, and to have the fingerprint of an item depend on the slot it is stored in.  In this way, elements can be (potentially) fixed by moving them to a different slot.

Inserts proceed as described in Section~\ref{sec:ACFoperations}.  However, 
in the Swapping ACF, the fingerprint hash maps $U\times \{0,\ldots, b-1\}\rightarrow \{0,\ldots, 2^f - 1\}$.  During an insert, the slot storing an element must be determined before its fingerprint can be calculated.

If a query $q$ is a false positive under configuration $C$, the filter can fix the query as follows.  Let $x_i$ be the element that collides with $q$ and let $b(x_i) = h^\ell_{\beta_i}(x_i)$ be the bin currently storing $x_i$.  Let $\sigma_i \in \{0,\ldots, b-1\}$ be the index of the slot in $b(x_i)$ currently storing $x_i$; thus $x_i$ is stored in slot $h^\ell_{\beta_{i}}(x_i)\cdot b + \sigma_x$.

The filter picks a slot index $\sigma'\in \{0,\ldots,\sigma_i-1,\sigma_i+1,\ldots b-1\}$, selected at random from the slots in $b(x_i)$, excluding the slot currently storing $x_i$. Let $x_j$ be the element currently stored in that slot if it exists.  The filter then swaps the elements: it stores fingerprint $h^f(x_i,\sigma')$ in slot $h^\ell_{\beta_i}(x_i)\cdot b + \sigma'$, and fingerprint $h^f(x_j,\sigma_i)$ in slot $h^\ell_{\beta_i}(x_i)\cdot b + \sigma_i$ (if $x_j$ does not exist, $\sigma_i$ becomes unoccupied).

\section{Bounding the False Positive Rate by the Number of Distinct Queries}
\label{sec:upper}

In this section we show that the Cuckooing ACF is \ouradaptive{}: it achieves strong performance against skewed datasets, where the queries are taken from a relatively small set of elements.

Our analysis focuses on a Cuckooing ACF with $k=2$ hash tables, $b=1$ slots per bin, and $N = n$ slots per hash table\footnote{That is to say, $\gamma = 2$.} (corresponding to the classic Cuckoo Hashing analysis).  
The experiments in Section~\ref{sec:experiments} indicate that our analysis likely extends to broader parameter ranges.  
However, formally completing the analysis for all parameters would require significant new structural insights in our proofs (e.g. Lemma~\ref{lem:initialfps}); we leave this to future work.

\begin{theorem}
  \label{thm:distinct}
  Consider a sequence of at most $n$ queries $q_1,\ldots q_n$ to a Cuckooing ACF $\mathcal F$ with $k=2$ hash tables, $N=n$ slots per table, and fingerprints of length $f = \log 1/\epsilon$ bits.  Let $Q = \bigcup_{i=1}^n \{q_i\}$.
  Then the expected number of false positives incurred by $\mathcal F$ while querying $q_1,\ldots, q_n$ is $\epsilon|Q| + O(\epsilon^2 |Q| + \log^4 n)$.  
\end{theorem}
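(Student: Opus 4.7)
The plan is to decompose the expected number of false positives into \emph{initial} FPs (the first time each $q\in Q$ is queried and yields \present{}) and \emph{repeat} FPs (all subsequent FPs on repeated queries). The target is to show that initial FPs contribute $\epsilon|Q|+O(\epsilon^2|Q|)$ and repeat FPs contribute at most $O(\log^4 n)$.

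To bound initial FPs, the key observation is that at the moment $q\in Q$ is queried for the first time, neither its location nor its fingerprint hashes have yet influenced $\mathcal{F}$: the configuration at that moment depends only on $S$ and on the earlier queries, none of which equals $q$.  Conditional on any reachable configuration with $n_\beta$ elements stored in table $\beta$ (so $n_0+n_1=n$), the probability that $q$ collides with the element at slot $h^{\ell}_\beta(q)$ is $(n_\beta/n)\cdot\epsilon$, so by inclusion--exclusion the probability that $q$ is a FP is $\epsilon-O(\epsilon^2)$.  Summing over $q\in Q$ gives the claimed $\epsilon|Q|+O(\epsilon^2|Q|)$.

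To bound repeat FPs I would use a cascading argument.  Every FP on $q$ triggers a cuckoo-style fix whose chain has expected length $O(1)$ and length $O(\log n)$ with probability $1-n^{-\omega(1)}$, by the standard $\gamma=2$ cuckoo hashing analysis.  Each individual move places an element $x$ into a new slot $s$ and can break a distinct query $q'\in Q$ only when $h^{\ell}_\beta(q')=s$ for some $\beta\in\{0,1\}$ and $h^{f}(q')=h^{f}(x)$; over the random hashes the expected number of such $q'$ is $O(\epsilon|Q|/n)$.  I would then build a \emph{cascade tree} rooted at each initial FP, whose children are the future FPs caused by moves in the chain that fixed the parent.  The expected branching factor per node is then $O(\log n\cdot\epsilon|Q|/n)$, which is sub-unit and lets the cascade decay geometrically whenever $|Q|\ll n/(\epsilon\log n)$; in the opposite regime the leading $\epsilon|Q|$ term already dominates.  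The remaining $O(\log^4 n)$ slack absorbs low-probability bad events: cuckoo chains that fail and trigger a rebuild (each rebuild costing up to $n$ FPs but occurring with probability $O(1/n^2)$ per operation), and pathological configurations in which $q$'s two candidate slots lie on a short cycle of the cuckoo graph so that the fix repeatedly re-breaks $q$ until a rebuild intervenes.

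The hardest step is the cascade bound.  A single move can break a future query, whose FP launches another chain of moves that can break yet more queries, so the bound on the tree requires showing that its expected total size is $O(1)$ per root.  This in turn needs structural control on the $\gamma=2$ cuckoo graph---bounded connected-component size, few cycles, and short paths between candidate slots---combined with the $O(\epsilon/n)$ breakage probability per move per query.  I expect this structural input is precisely the content of the paper's Lemma~\ref{lem:initialfps}, and the main obstacle is bridging that lemma to the $\log^4 n$ slack term while keeping the initial-FP analysis clean against a configuration that has already been perturbed by all earlier fixing chains.
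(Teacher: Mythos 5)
Your high-level intuition is sound, but the proposal has two genuine gaps. The first is that your decomposition is quantitatively wrong: the FPs on repeated occurrences are \emph{not} $O(\log^4 n)$. Consider a stored $x$ and two queries $q,q'$ with $h^\ell_0(q)=h^\ell_0(x)$, $h^\ell_1(q')=h^\ell_1(x)$, and $h^f(q)=h^f(q')=h^f(x)$. Fixing $q$ moves $x$ into collision with $q'$ and fixing $q'$ moves it back, so if $q$ and $q'$ alternate in the (predetermined) sequence, essentially every later occurrence is a repeat FP. Summing over all such triples weighted by query multiplicities gives expected total cost $\Theta(\epsilon^2|Q|)$ --- this is precisely what the $O(\epsilon^2|Q|)$ term in the theorem absorbs (it is the paper's Criterion~\ref{criterion:pathscollide}, ``colliding paths''), not just inclusion--exclusion slack on first queries. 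Your repeat-FP bucket therefore has to carry an $\epsilon^2|Q|$-sized term, and your cascade analysis must be tight enough to produce it without extra logarithmic factors.

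The second gap is that the cascade-tree argument does not close outside the subcritical regime. When $|Q|=\Omega(n/(\epsilon\log n))$ your branching factor reaches $1$ and the tree bound collapses; your fallback that ``the leading $\epsilon|Q|$ term already dominates'' is not a bound, since the only unconditional bound available is $n$ total FPs, which can exceed $\epsilon|Q|+O(\epsilon^2|Q|+\log^4 n)$ by a factor of $1/\epsilon$ or $\log n$ (e.g.\ $|Q|=n$, $\epsilon=1/\log n$). Even in the subcritical regime, treating the cascade as a branching process requires an independence you do not have: a child FP's fixing chain revisits the very slots and elements that created it. The paper sidesteps both problems with a global amortization that is uniform in $|Q|$: a potential counting pairs (element still in its $C_0$ slot, initial FP it collides with), so that every non-costly fix pays for itself by permanently evicting one such pair, while the rare costly events (colliding paths, loops, self-collisions) are bounded separately. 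The structural input is indeed Lemma~\ref{lem:initialfps}, but its content is not about component sizes or cycles of the cuckoo graph as you anticipate; it says the elements moved by any fix form a prefix still in their $C_0$ positions followed by a suffix of displaced ones, which is what guarantees that fixing a non-initial FP pushes elements back toward $C_0$ and cannot spawn fresh cascades outside the enumerated costly cases.
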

Thus, for any sequence of $n$ queries with a support of size $|Q| = \omega(\log^4 n/\epsilon)$, the Cuckooing ACF is \ouradaptive{}.  

In contrast, for a worst-case input sequence, the Cyclic ACF and the Swapping ACF do not perform much better than a Cuckoo Filter.  Taking the Cyclic ACF as an example, consider a sequence of $n$ queries, each chosen uniformly at random from a randomly-selected set of size $|Q| = 1/\epsilon^{2^s}$.
Each of these queries collides with some $x\in S$ under \emph{every} choice of hash selector bits with probability $\Omega(\epsilon^{2^s})$.
Thus, over $n$ queries, the Cyclic ACF incurs
$\Omega(n)$ false positives for constant $\epsilon$ and $s$, compared to $O(\log^4 n)$ false positives for the Cuckooing ACF via Theorem~\ref{thm:distinct}.  
See the proof of Theorem~\ref{thm:cycliclower} for a more detailed explanation. 

\subsection{Proof of Theorem~\ref{thm:distinct}}

Without loss of generality we assume that each false positive query only collides in one of the hash tables.
Since $k=2$, fixing a query that collides in both hash tables can be simulated by executing the fixing function for each hash table separately.  

To simplify notation, we define $B(i,C) = h^{\ell}_{C[i]}(x_i)$ to be the bin storing $x_i$ under configuration $C$, and $B'(i,C) = h^{\ell}_{1-C[i]}(x_i)$ to be the alternate bin that can store $x_i$.  Because $b=1$ for this analysis, we can refer to ``slot'' $B(i,C)$ and ``bin'' $B(i,C)$ equivalently.

Let $C_0$ be the configuration of $\mathcal{F}$ before the first query $q_1$, and for $1\leq i \leq n$ let $C_i$ be the configuration after query $q_i$.
For each $1\leq i\leq n$, if $q_i$ is a false positive under $C_{i-1}$, let $k_i$ be the number of elements moved when fixing query $q_i$; otherwise let $k_i=0$.  We denote the sequence of elements moved when fixing $q_i$ as $x_{i_1}, x_{i_2}, \ldots, x_{i_{k_i}}$. Thus, $q_i$ collides with $x_{i_1}$ under $C_{i-1}$.  
We call the sequence of slots affected by these movements 
$B(i_1,C_{i-1}),B(i_2,C_{i-1}),\ldots B(i_{k_i},C_{i-1}), B'(i_{k_i},C_{i-1})$ 
the \defn{path} on $C_{i-1}$ of $q_i$.\footnote{The final term $B'(i_{i_{k_i}},C_{i-1})$ denotes that the last element is swapped to a new position, and does not ``cuckoo'' any further elements.}

We say that $q_i$ \defn{loops} if one of the moved elements repeats; i.e. there exist $1\leq \ell_1 < \ell_2 \leq k_i$ such that ${i_{\ell_1}} = {i_{\ell_2}}$.  Interestingly, classic Cuckoo Hashing analysis generally only needs to bound the number of queries that loop twice, as only twice-looping queries force a rebuild.  However, even a query that loops once cannot be fixed in a Cuckooing ACF, so we must bound how frequently this happens in our analysis.

Let the \defn{initial false positives} be the queries in $Q$ that are false positives for $\mathcal{F}$ in configuration $C_0$.
\iffull
For a fixed $C$, $\mathcal{F}$, and $Q$, 
we denote the set of initial false positives as 
\[
  F_0(C,\mathcal{F},Q) = \{q\in Q ~|~ q \text{ is a false positive for $\mathcal{F}$ under } C_0\}.
\]  
Fixing $C$, $\mathcal{F}$, and $Q$, we refer to $F_0 := F_0(C,\mathcal{F}, Q)$.
\fi

We start with a structural lemma: the elements moved when fixing any query consist of a (possibly empty) sequence of elements stored in the slot they occupied in $C_0$, followed by a (possibly empty) sequence of elements not stored in the slot they occupied in $C_0$.

\begin{lemma}
  \label{lem:initialfps}
  If a query $q_i$ on a configuration $C_{i-1}$ moves an element $x_{i_\ell}$ satisfying $C_{i-1}[i_\ell] \neq C_0[i_\ell]$, and $q_i$ does not loop, then all $j$ with $\ell\leq j\leq k_i$ satisfy $C_{i-1}[i_j] \neq C_0[i_j]$. 
\end{lemma}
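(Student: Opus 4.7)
\medskip
\noindent\textbf{Proof plan.} I would prove this by induction on $j$, starting from $j=\ell$, using the fact that once an element has been moved off of its initial slot, cuckooing it must return it to that initial slot (since $k=2$), and the only element that could be sitting there in $C_{i-1}$ is one that itself has been displaced.

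First I would establish the key geometric observation. Because $k=2$, the hash index of any element takes values in $\{0,1\}$, so $C_{i-1}[i_j]\neq C_0[i_j]$ implies $1-C_{i-1}[i_j]=C_0[i_j]$. In the cuckoo chain used to fix $q_i$, the element $x_{i_j}$ moves from slot $B(i_j,C_{i-1})$ to its alternate slot $B'(i_j,C_{i-1})=h^{\ell}_{1-C_{i-1}[i_j]}(x_{i_j})$. Under our assumption that $C_{i-1}[i_j]\neq C_0[i_j]$, this alternate slot equals $h^{\ell}_{C_0[i_j]}(x_{i_j})=B(i_j,C_0)$, the slot that originally housed $x_{i_j}$.

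For the inductive step, let $j$ be such that $\ell\le j<k_i$ and $C_{i-1}[i_j]\neq C_0[i_j]$. By the chain's definition, $x_{i_{j+1}}$ is exactly the element occupying $B'(i_j,C_{i-1})$ under $C_{i-1}$, so $B(i_{j+1},C_{i-1})=B'(i_j,C_{i-1})=B(i_j,C_0)$. Suppose for contradiction that $C_{i-1}[i_{j+1}]=C_0[i_{j+1}]$. Then $B(i_{j+1},C_0)=B(i_{j+1},C_{i-1})=B(i_j,C_0)$; that is, $x_{i_j}$ and $x_{i_{j+1}}$ occupy the same slot under $C_0$. Since $b=1$, this forces $x_{i_j}=x_{i_{j+1}}$, i.e.\ $i_j=i_{j+1}$, contradicting the assumption that $q_i$ does not loop. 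Hence $C_{i-1}[i_{j+1}]\neq C_0[i_{j+1}]$, completing the induction.

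I do not expect a real obstacle here: the argument is purely structural and relies only on $k=2$ (so the alternate index is forced), $b=1$ (so slot collisions force element equality), and the no-loop hypothesis (to rule out the degenerate case). The only thing to be careful about is the bookkeeping between $B$ and $B'$ under the two configurations $C_0$ and $C_{i-1}$, which is exactly where the equation $B'(i_j,C_{i-1})=B(i_j,C_0)$ does all the work.
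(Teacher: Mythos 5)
Your proof is correct and follows essentially the same route as the paper's: induction along the cuckoo path, using that for $k=2$ the alternate slot $B'(i_j,C_{i-1})$ coincides with the original slot $B(i_j,C_0)$ whenever the hash indices differ, and then deriving a contradiction from two distinct elements sharing a slot under $C_0$ (the paper phrases the final contradiction as an impossibility with $b=1$ rather than as a violation of the no-loop hypothesis, but this is cosmetic). The only point to make fully explicit, as the paper does, is that the no-loop assumption is also what guarantees $x_{i_j}$ and $x_{i_{j+1}}$ are still sitting in their $C_{i-1}$ slots at the moment they are displaced.
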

\begin{proof}
  This proof is by induction on $j$; the base case $j=\ell$ is satisfied by assumption.

  Assume by induction that $C_{i-1}[i_{j-1}] \neq C_0[i_{j-1}]$ for some $j > \ell$.  Since $q_i$ does not loop, when $x_{i_{j-1}}$ is moved, it cannot have been moved previously while fixing $q_i$, and thus must be stored in slot 
	$B(i_{j-1},C_{i-1})$.
	Then  after $x_{i_{j-1}}$ is moved it must be stored in slot 
	$B(i_{j-1},C_0)$;
	this must be equal to the slot storing $x_{i_j}$. Because $q_i$ does not loop, $x_{i_j}$ must be stored where it was when the fixing began; i.e. in 
	$B(i_j,C_{i-1})$.
	Thus 
	$B(i_j,C_{i-1}) = B(i_{j-1},C_0)$,
	so $C_0[i_j] \neq C_{i-1}[i_j]$, as otherwise $x_{i_j}$ and $x_{i_{j-1}}$ would be stored in the same slot in $C_0$.
\end{proof}

Lemma~\ref{lem:initialfps} immediately gives structure to the problem in two key ways.  First, it limits how queries can break one another: if $q$ is a false positive, but is not an initial false positive, then there must be some initial false positive $q_i$ that caused $q$ to become a false positive.  
We do not need to worry about non-initial false positives causing other, new false positives.
Second, it ties the behavior of all elements to how they behave on the initial configuration $C_0$.  This means that we can make statements about how queries interact using $C_0$; we do not need to reset our analysis every time the filter configuration changes.

\iffull

Our analysis depends heavily on how queries behave on configuration $C_0$, so
we introduce notation to help discuss queries on $C_0$.  
Let $k^0_i$ be the number of elements moved when querying $q_i$ on configuration $C_0$.   Let $x_{i_1'},x_{i_2'}, \ldots x_{i_{k^0_i}'}$ be the set of elements moved when fixing $q_i$, if $q_i$ were queried on configuration $C_0$.

We now define the notion of {costly queries}.  In short, 
costly queries are either difficult to fix, or break other queries.
However, costly queries will prove to be rare, and will not substantially increase the total number of false positives.

A query $q_i$ is a \defn{costly query} if $q_i$ is a false positive on $C_{i-1}$ and $q_i$ meets one of the following criteria.
We refer to these as Criteria 1-4 in our analysis.

\begin{enumerate}
  \item \label{criterion:pathscollide} 
    If $q_i$ is an initial false positive, then 
		there exists a $q_j$ such that the path of $q_j$ when queried on $C_0$ collides with the path of $q_i$ when queried on $C_0$;
		if $q_i$ is not an initial false positive, then there exists a $q_j$ such that 
		the path of $q_j$ when queried on $C_0$ is hashed to by $q_i$.
    In other words, if $q_i\in F_0$ there exists a $q_j\in F_0\setminus\{q_i\}$, and a pair of indices $\ell_j, \ell_i$ with $1\leq \ell_j\leq k^0_j$ and $1\leq \ell_i\leq k^0_i$ and a hash index $\beta\in\{0,1\}$ where $h^\ell_{\beta}(x_{j_{\ell_j}'}) = h^\ell_{\beta}(x_{i_{\ell_i}'})$; if $q_i\notin F_0$ then there exists a $q_j\in F_0\setminus\{q_i\}$ and an index $\ell_j$ with $h^{\ell}_{C_0[\ell_j]}(q_i) = h^{\ell}_{C_0[\ell_j]}(x_{j'_{\ell_j}} )$.
  \item \label{criterion:loops} $q_i$ loops when queried on $C_{i-1}$.
  \item \label{criterion:loopcollision} 
                There exists a false positive $q_j\in Q$ that loops on $C_{j-1}$ and whose path contains a slot that $q_i$ hashes to.
    In other words, there exists a $q_j\in Q$ that loops on $C_{j-1}$, an integer $1 \leq \ell\leq k_j$, and a hash index $\beta\in\{0,1\}$ such that 
		$B(j_\ell,C_{j-1}) = h^{\ell}_{\beta}(q_i)$ or $B'(j_\ell,C_{j-1}) = h^{\ell}_{\beta}(q_i)$.
  \item \label{criterion:selfcollision} 
    $q_i$ hashes to a slot along its own path (excluding the slot storing the first moved element): thus, 
    there exists a $1\leq \ell\leq k_i$  such that 
		$B'(i_\ell,C_{i-1}) = h^{\ell}_{1-C_{i-1}[i_\ell]}(q_i)$.
\end{enumerate}

To begin, we show that all all false positives queries are either an initial false positive, or a costly query.
\begin{lemma}
  \label{lem:twokindsfps}
  Let $q_i$ be a false positive on $C_{i-1}$ for filter $\mathcal{F}$.  Then $q_i\in F_0$ or $q_i$ is costly. 
\end{lemma}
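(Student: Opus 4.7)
The plan is a case analysis on whether $q_i\in F_0$ and whether $q_i$ itself loops, followed by a traceback argument in the one remaining case. If $q_i\in F_0$ the claim is immediate, and if $q_i\notin F_0$ but $q_i$ loops on $C_{i-1}$ then Criterion~2 holds. The real work happens when $q_i\notin F_0$ and $q_i$ does not loop: I will locate an earlier query $q_j$ whose fixing produced the collision $q_i$ now sees, and then show that $q_j$ itself witnesses either Criterion~1 or Criterion~3 for $q_i$.

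Fix an element $x_\ell\in S$ that collides with $q_i$ on $C_{i-1}$. Since the Cuckooing ACF's fingerprint hash is independent of the configuration, $x_\ell$ must have changed bins between $C_0$ and $C_{i-1}$, so $C_{i-1}[\ell]\neq C_0[\ell]$ and $q_i$ hashes to the slot $B(\ell,C_{i-1}) = B'(\ell,C_0)$. Let $q_j$, with $j<i$, be the most recent query before $q_i$ whose fixing moved $x_\ell$. Each such move flips $x_\ell$'s hash index, and no further moves of $x_\ell$ occur between $q_j$ and $q_i$, so $C_{j-1}[\ell] = C_0[\ell]$: just before $q_j$'s fixing, $x_\ell$ sits in its $C_0$ position. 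Now I split on whether $q_j$ loops on $C_{j-1}$. If it does, then $q_j$'s $C_{j-1}$-path contains $B'(\ell,C_{j-1}) = B'(\ell,C_0)$, the destination of $x_\ell$'s move, and $q_i$ hashes exactly there, so Criterion~3 holds. If $q_j$ does not loop, then applying the contrapositive of Lemma~\ref{lem:initialfps} at position $\ell$ on $q_j$'s path forces every earlier path element (in particular the first collision element $x_{j_1}$) into its $C_0$ position; hence $q_j$ collides with $x_{j_1}$ on $C_0$ as well, giving $q_j\in F_0$. A short induction along the path then shows that the $C_0$-path and $C_{j-1}$-path of $q_j$ coincide up through $x_\ell$'s move: each step's destination bin is determined by hashing the moved element, and the occupant of that destination bin is the same in $C_0$ and $C_{j-1}$ because every early path element is in its $C_0$ position. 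Consequently $B'(\ell,C_0)$ lies on $q_j$'s $C_0$-path, $q_i$ hashes to it, and since $q_j\in F_0$ while $q_i\notin F_0$ forces $q_j\neq q_i$, Criterion~1 is satisfied.

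The main obstacle I anticipate is the inductive alignment of the $C_{j-1}$- and $C_0$-paths in the non-looping subcase: one must rule out the possibility that the two cuckoo chains diverge before reaching $x_\ell$'s move. Lemma~\ref{lem:initialfps} supplies exactly the missing ingredient by pinning every early path element to its $C_0$ position, after which the chain evolves identically in both configurations. Minor care is also needed at the endpoints---for example when $x_\ell$ is itself $x_{j_1}$, or when $B'(\ell,C_0)$ was empty in $C_0$ so the $C_0$-path terminates at $x_\ell$---but these degenerate cases fit the same framework without trouble.
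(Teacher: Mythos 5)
Your proof is correct and follows essentially the same route as the paper's: identify the earlier query $q_j$ whose fixing produced the collision, dispatch the case where $q_j$ loops via Criterion~3, and otherwise use the contrapositive of Lemma~\ref{lem:initialfps} to pin $x_{j_1}$ to its $C_0$ position, conclude $q_j\in F_0$, and invoke Criterion~1 (your extra verification that the $C_0$- and $C_{j-1}$-paths of $q_j$ coincide up to $x_\ell$ is more careful than the paper, which asserts Criterion~1 directly). One small ordering nit: the assertion $C_{j-1}[\ell]=C_0[\ell]$ is only guaranteed when $q_j$ does not loop (a looping $q_j$ may move $x_\ell$ twice), so it belongs inside that subcase—though your looping subcase survives regardless, since both candidate bins of $x_\ell$ lie on $q_j$'s path and $q_i$ hashes to one of them.
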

\begin{proof}
  Assume that $q_i$ is a false positive with $q_i\notin F_0$; we show that $q_i$ is costly.  Let $1\leq j < i$ be the smallest $j$ such that $q_i$ is a false positive under $C_{j}$ ($j$ must exist because $q_i$ is a false positive under $C_{i-1}$, but $q_i$ is not an initial false positive).  If $q_j$ loops, then $q_i$ is costly by Criterion~\ref{criterion:loopcollision}.  

  Now, assume $q_j$ does not loop.  There must be some $x_{j_\ell}$ moved by $q_j$ (when queried on $C_{j-1}$) such that $q_i$ collides with $x_{j_\ell}$ in $C_j$.  Since $q_i$ is not an initial false positive and $q_j$ does not loop, we must have that $q_i$ collides with $x_{j_\ell}$ when $x_{j_\ell}$ is stored in slot 
	$B'(j_\ell, C_{j-1}) = B'(j_\ell,C_0)$.  
	However, this means that $C_{j-1}[j_\ell] = C_0[j_\ell]$, and therefore $C_{j-1}[j_1]=C_0[j_1]$ as otherwise we reach a contradiction with Lemma~\ref{lem:initialfps}.
	This means that $q_j$ is an initial false positive, and thus $q_i$ is costly by Criterion~\ref{criterion:pathscollide}.
\end{proof}

We bound the cost of all queries using a potential function argument. The potential of a configuration is the number of elements $x\in S$ stored in their original position that collide with an initial false positive.
  Specifically, for all $0\leq t \leq n$, define 
  \[
    \Phi(t) = \bigg|\{x_i\in S, q\in F_0 ~|~ x_i \text{ collides}
        \text{ with } q \text{ under } C_{t} \text{ and } C_0[i] = C_t[i]\}\bigg|.
    \]

  Let the amortized cost of query $q_i$ be $1+ \Phi(i) - \Phi(i-1)$ if $q_i$ is a false positive, and $\Phi(i) - \Phi(i-1)$ if $q_i$ is not a false positive.  Summing (and taking $\Phi(n) \geq 0$), the expected number of false positives incurred during the queries on $q_1,\ldots, q_{n}$ is at most the expected number of amortized false positives plus $\E[\Phi(0)]$. 
  The proof proceeds in three parts: first, bounding $\E[\Phi(0)]$, then bounding the cost of queries that are not costly, and finally bounding the cost of costly queries.

	The expected potential before any queries follows immediately from linearity of expectation and $f= \log (1/\epsilon)$.
  \begin{lemma}
    \label{lem:potential}
    $ \E[\Phi(0)] \leq \epsilon|Q|$
  \end{lemma}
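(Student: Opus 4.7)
The plan is to expand $\Phi(0)$ as a sum of collision indicators, observe that each colliding pair automatically witnesses $q\in F_0$, and then bound the per-pair collision probability by $\epsilon/n$ using the independence of $q$'s hashes from the construction of $C_0$.

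First I would write
\[
\Phi(0) = \sum_{q\in Q}\sum_{x_i\in S} \mathbf{1}\bigl[q\in F_0 \text{ and } x_i \text{ collides with } q \text{ under } C_0\bigr].
\]
The immediate observation is that the two events in the indicator are not independent: if $x_i$ collides with $q$ under $C_0$, then by definition $q$ is a false positive under $C_0$, so the event $q\in F_0$ is implied. Hence I can drop that conjunct and bound
\[
\E[\Phi(0)] \leq \sum_{q\in Q}\sum_{x_i\in S}\Pr\bigl[x_i \text{ collides with } q \text{ under } C_0\bigr].
\]

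Next I would bound the probability of a single collision. Conditioning on the configuration $C_0$ (which depends only on the hashes of $x_1,\dots,x_n$), the collision of $x_i$ with $q$ under $C_0$ requires $h^{\ell}_{C_0[i]}(q) = h^{\ell}_{C_0[i]}(x_i)$ and $h^f(q) = h^f(x_i)$. Since $q\notin S$, its location and fingerprint hashes are independent of the hashes used to construct $C_0$, so conditionally on $C_0$ these two events are independent and occur with probabilities $1/N$ and $1/2^f = \epsilon$ respectively. Thus
\[
\Pr\bigl[x_i \text{ collides with } q \text{ under } C_0\bigr] \leq \frac{1}{N}\cdot \frac{1}{2^f} = \frac{\epsilon}{n},
\]
using the parameter choices $N=n$ and $f=\log(1/\epsilon)$ from Theorem~\ref{thm:distinct}.

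Summing over the $n$ choices of $x_i$ and $|Q|$ choices of $q$ gives $\E[\Phi(0)] \leq |Q|\cdot n\cdot \epsilon/n = \epsilon|Q|$, which is the claim. The only subtle point is the independence step: one must be careful that $C_0[i]$ is determined by the hashes of $S$ alone, so even though $C_0[i]$ is itself random, $q$'s hashes remain independent of the specific slot $h^{\ell}_{C_0[i]}(x_i)$. This is the main (mild) obstacle; once it is articulated cleanly, the proof is a routine linearity-of-expectation calculation and does not require invoking Lemma~\ref{lem:initialfps} or any of the machinery used later to bound costly queries.
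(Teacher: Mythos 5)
Your proposal is correct and follows essentially the same route as the paper's proof: bound each of the $|Q|\cdot n$ pair-collision events by $\epsilon/n$ and apply linearity of expectation. The extra care you take in noting that a collision under $C_0$ already implies $q\in F_0$, and that $q$'s hashes are independent of the configuration built from $S$, is a correct (and slightly more explicit) articulation of what the paper leaves implicit.
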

  \iffull
  \begin{proof}
  For a given $x_i\in S$ and $q\in Q$, the probability that $h_{C_0[i]}(x_i) = h_{C_0[i]}(q)$ is at most $\epsilon/n$.  
  Summing over the $|Q|n$ pairs obtains $\E[\Phi(0)]\leq \epsilon|Q|$.
  \end{proof}
  \fi

  Now we show that non-costly queries have no amortized cost.  This is immediate for any $q_i$ that is not a false positive.
	For a non-costly false positive $q_i$, we show that this cost is offset by a decrease in potential from moving $x_{i_1}$.
  \begin{lemma}
    \label{lem:noncostlycost}
    If a query $q_i$ is not costly, then the amortized cost of $q_i$ is at most $0$.
  \end{lemma}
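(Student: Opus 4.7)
The plan is to split on whether $q_i$ is a false positive under $C_{i-1}$, and when it is, to show that fixing $q_i$ removes exactly one pair from the potential $\Phi$, offsetting the unit cost of the false positive.

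If $q_i$ is not a false positive, no fix is triggered, $C_i = C_{i-1}$, so $\Phi(i) = \Phi(i-1)$ and the amortized cost is $0$. Now suppose $q_i$ is a false positive. Since $q_i$ is not costly, Lemma~\ref{lem:twokindsfps} gives $q_i \in F_0$; non-Criterion~\ref{criterion:loops} additionally ensures $q_i$ does not loop, so the cuckoo chain $x_{i_1}, \ldots, x_{i_{k_i}}$ is well-defined. I would show $\Phi(i-1) - \Phi(i) \geq 1$ via two sub-steps. First, establish that $x_{i_1}$ equals $x_{i_1'}$ (the first element on $q_i$'s $C_0$ path) and that $C_{i-1}[i_1] = C_0[i_1]$; this forces the pair $(x_{i_1}, q_i)$ to contribute to $\Phi(i-1)$. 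Second, show that no other pair in $\Phi$ involving any chain element changes status, so the removal of $(x_{i_1}, q_i)$ — which exits $\Phi$ as soon as $x_{i_1}$ is moved off its $C_0$ slot — accounts for the entire decrement.

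For the first sub-step, let $s$ be the slot storing $x_{i_1'}$ in $C_0$; I would argue by induction on $t < i$ that the contents of $s$ are undisturbed at every step. The four non-costly criteria collectively rule out all sources of disturbance: non-Criterion~\ref{criterion:pathscollide} prevents any other initial false positive from having a $C_0$ path (and inductively, an actual path) that touches $s$; Lemma~\ref{lem:initialfps} forces any non-initial false positive to act only on elements in non-$C_0$ positions, which excludes $x_{i_1'}$ while it remains undisturbed at $s$; non-Criterion~\ref{criterion:loopcollision} excludes looping queries whose paths touch $s$; and a prior non-costly query of the same element $q_i$ would vacate $s$ and, inductively, cannot re-break $q_i$ without triggering a costly criterion. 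For the second sub-step, non-Criterion~\ref{criterion:selfcollision} implies $q_i$ hashes to exactly one slot on its own cuckoo chain, namely $s$, and non-Criterion~\ref{criterion:pathscollide} implies no $q_j \in F_0 \setminus \{q_i\}$ hashes to any chain slot. Together, these show that $(x_{i_1}, q_i)$ is the unique pair in $\Phi$ involving any chain element, so $\Phi(i) = \Phi(i-1) - 1$ exactly, and the amortized cost of $q_i$ is $1 + \Phi(i) - \Phi(i-1) = 0$.

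The main obstacle is the induction in the first sub-step. The most delicate sub-case is repeated queries to the element $q_i$ itself: those queries do modify $s$, and one must verify that after such a modification $q_i$ cannot reappear as a non-costly false positive at time $i$. The key fact enabling this is that after a successful cuckoo fix the originating slot is vacated, and any mechanism that would refill that slot with a fingerprint colliding with $q_i$ necessarily implicates one of Criteria~\ref{criterion:pathscollide}, \ref{criterion:loopcollision}, or \ref{criterion:selfcollision}, contradicting the assumption that the current query $q_i$ is non-costly.
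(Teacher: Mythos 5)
Your overall route is the same as the paper's: split on whether $q_i$ is a false positive, use Lemma~\ref{lem:twokindsfps} and Criterion~\ref{criterion:loops} to reduce to a non-looping initial false positive, argue that moving $x_{i_1}$ removes the pair $(x_{i_1},q_i)$ from $\Phi$, and argue that no other pair enters $\Phi$. You are in fact more explicit than the paper on one point it leaves implicit---that $(x_{i_1},q_i)$ is actually counted in $\Phi(i-1)$, which requires $C_{i-1}[i_1]=C_0[i_1]$---and your plan for establishing this (the slot holding $x_{i_1'}$ in $C_0$ is undisturbed, since any disturbance would implicate Criterion~\ref{criterion:pathscollide} or~\ref{criterion:loopcollision} or the structure of Lemma~\ref{lem:initialfps}) is a sensible way to fill that in.

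The step I would push back on is your second sub-step. You assert that non-Criterion~\ref{criterion:pathscollide} implies no $q_j\in F_0\setminus\{q_i\}$ hashes to any chain slot, but Criterion~\ref{criterion:pathscollide} for an initial false positive is a statement about collisions between the \emph{$C_0$-paths} of $q_i$ and $q_j$, not about $q_j$ hashing onto the slots of $q_i$'s \emph{runtime} chain at time $i$; these are different objects, since the time-$i$ chain generally contains elements that do not lie on $q_i$'s $C_0$-path. The inference therefore does not follow from the definition as you state it. The paper bridges this gap using Lemma~\ref{lem:initialfps}: a pair $(x_{i_\ell},q_j)$ can only be \emph{added} to $\Phi$ if the fix moves $x_{i_\ell}$ \emph{into} its $C_0$ position, which confines $\ell$ to the suffix $\lambda+1,\ldots,k_i$ of the chain; that suffix read in reverse is a subsequence of $q_j$'s $C_0$-path, the prefix $x_{i_1},\ldots,x_{i_\lambda}$ is a subsequence of $q_i$'s $C_0$-path, and the junction slot between the two blocks is exactly the witness for Criterion~\ref{criterion:pathscollide} (or for Criterion~\ref{criterion:selfcollision} when $q_j=q_i$). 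Your argument needs this translation from runtime-chain slots to $C_0$-path slots to be made explicit; without it, the appeal to Criterion~\ref{criterion:pathscollide} in the second sub-step is not justified.
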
 

  \iffull
  \begin{proof}
  If $q_i$ is not a false positive, then no elements are moved, and $\Phi(i) = \Phi(i-1)$. Thus the amortized cost of a query that is not a false positive is $0$.  

  If $q_i$ is a false positive but is not costly, then $q_i$ is an initial false positive by Lemma~\ref{lem:twokindsfps}; furthermore $q_i$ does not loop by Criterion~\ref{criterion:loops}.  
  Since $q_i$ is an initial false positive, moving $x_{i_1}$ decreases $\Phi$ by 1 ($q_i$ does not collide with $x_{i_1}$ in its new position by Criterion~\ref{criterion:selfcollision} and $x_{i_1}$ is not moved again while fixing $q_i$ because $q_i$ does not loop by Criterion~\ref{criterion:loops}).

  We now show that this is the only change to $\Phi$.     Assume the contrary: there is a $q_j\in F_0$ and an $\ell \leq k_i$ such that $x_{i_\ell}$ collides with $q_j$ under $C_i$, and $C_i[i_\ell] = C_0[i_\ell]$.  By Lemma~\ref{lem:initialfps}, since $q_i$ does not loop we can partition the elements moved by $q_i$ into at most two sequences of elements: the possibly-empty sequence $x_{i_1},\ldots x_{i_\lambda}$,  and the sequence $x_{i_{\lambda + 1}},\ldots x_{i_{k_i}}$ where for all $\kappa\in \{1,\ldots,\lambda\}$, $C_i[i_\kappa] \neq C_0[i_\kappa]$, and for all $\kappa\in\{\lambda+1,\ldots,k_i\}$, $C_i[i_\kappa] =  C_0[i_\kappa]$.  Thus, $\ell > \lambda$.  
  Recall that ${j_1'},\ldots {j_{k^0_j}'}$ is the sequence of elements moved when querying $q_j$ on $C_0$.
  The sequence ${i_\ell}, {i_{\ell-1}},\ldots, {i_{\lambda + 1}}$ (notice that $\ell,\ell-1 \ldots, \lambda+1$ is in decreasing order) must be a subsequence of  ${j_1'},\ldots {j_{k^0_j}'}$. 
  Similarly, if ${i_1,\ldots, i_\lambda}$ exists it must be a subsequence of $i_1',\ldots, i_\lambda'$.
  If $q_i = q_j$, then $q_i$ must be costly by Criterion~\ref{criterion:selfcollision}.  If $q_i\neq q_j$, then $q_i$ must be costly by Criterion~\ref{criterion:pathscollide}, since if $x_{i_\lambda}$ exists then $h^\ell_{C_0[i_\lambda]}(x_{i_\lambda}) = h^\ell_{1-C_0[i_{\lambda+1}]}(x_{i_{\lambda+1}})$, and if $x_{i_\lambda}$ does not exist then $h^{\ell}_{1-C_0[i_{\lambda+1}]}(q_i) = h^\ell_{1-C_0[i_{\lambda+1}]}(x_{i_{\lambda+1}})$.  Thus $\Phi(i) - \Phi(i-1)\leq -1$, and the amortized cost of $q_i$ is at most $0$.
  \end{proof}
  \fi

  We begin analyzing costly queries by bounding their cost in terms of their path length.  Intuitively, this bound reflects that each of the $k_i$ elements moved when fixing $q_i$ collides with an element of $Q$ with probability $\epsilon/n$.

  \begin{lemma}
    \label{lem:costlycost}
    If $q_i$ is costly, then its expected amortized cost is at most $4 + 2\epsilon |Q|k_i/n$.
  \end{lemma}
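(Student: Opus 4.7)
The plan is to bound $\E[1 + \Phi(i) - \Phi(i-1)]$ directly by tracking how moving the $k_i$ elements on the path of $q_i$ can change the potential. By definition of $\Phi$, only these $k_i$ elements can cause a change, and each only contributes when it sits in its original slot, i.e., when $C_t[i_\ell] = C_0[i_\ell]$.

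I would first handle the case in which $q_i$ does not loop. Lemma~\ref{lem:initialfps} partitions the path into an initial stretch of elements satisfying $C_{i-1}[i_\ell] = C_0[i_\ell]$ (which move \emph{out} of their original positions, so cannot increase $\Phi$) and a subsequent stretch of elements satisfying $C_{i-1}[i_\ell] \neq C_0[i_\ell]$ (which move \emph{into} their original positions, so may increase $\Phi$). For each element $x_{i_\ell}$ in the latter stretch, its new slot is the deterministic position $B(i_\ell,C_0)$, independent of the hashes applied to elements of $Q$. For any fixed $q\in F_0$, the event that $q$ collides with $x_{i_\ell}$ at that slot is a simultaneous match of location and fingerprint hashes, occurring with probability at most $\epsilon/n$. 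Summing over at most $|F_0|\le |Q|$ initial false positives and the at most $k_i$ elements in the stretch gives, by linearity of expectation, an expected potential increase of at most $\epsilon|Q| k_i / n$.

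For the looping case (Criterion~\ref{criterion:loops}) or when $q_i$ self-intersects along its path (Criterion~\ref{criterion:selfcollision}), the clean partition above breaks because an element can be visited twice, and Lemma~\ref{lem:initialfps} does not directly apply. I would handle this by bounding the number of (element, candidate new slot) pairs traversed along the path by $2k_i$ and union-bounding over them, which doubles the main estimate to at most $2\epsilon|Q|k_i/n$. The additive constant $4$ then absorbs the following fixed contributions: the $+1$ for the false positive itself, the first moved element $x_{i_1}$ (whose collision is with $q_i$ rather than with an element of $F_0 \setminus \{q_i\}$ and must be charged separately), the terminal empty slot $B'(i_{k_i},C_{i-1})$ which is not preceded by an element move, and the rare event of a rebuild triggered mid-fix (which occurs with probability $O(1/n)$ and resets $\Phi$ to a value with expectation at most $\epsilon|Q|$, contributing $O(1)$ in expectation).

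The main obstacle is the looping case, since Lemma~\ref{lem:initialfps} is explicitly stated for non-looping queries and the proof of the non-looping bound leans heavily on its two-phase path structure. When $q_i$ loops, the ``original vs.\ non-original'' dichotomy collapses, and a careful accounting is required to ensure that every slot an element visits is charged at most twice. The delicate point is to verify that the factor-$2$ blowup together with the additive $4$ is genuinely sufficient to absorb both the repeated visits and any rebuild events, without the dependence on $k_i$ degrading past $2\epsilon|Q|k_i/n$.
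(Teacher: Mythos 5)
Your overall strategy---bounding $\E[1+\Phi(i)-\Phi(i-1)]$ by counting, over the $k_i$ moved elements, their possible collisions with members of $F_0$---is the same as the paper's. But there is a genuine gap in how you handle the conditioning on the event ``$q_i$ is costly.'' That event is itself defined by hash collisions: under Criterion~\ref{criterion:pathscollide} there is a $q_j\in F_0$ whose path on $C_0$ intersects the path of $q_i$, and under Criterion~\ref{criterion:loopcollision} some looping query's path contains a slot $q_i$ hashes to. Conditioned on such an event, the probability that one of the moved elements collides with that particular $q_j\in F_0$ is \emph{not} $\epsilon/n$---it is essentially $1$. Your accounting treats every $(q',x_{i_\ell})$ pair other than $(q_i,x_{i_1})$ as a fresh probability-$\epsilon/n$ event, so the conditioned collisions with $q_j$ are uncovered. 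The paper's proof budgets an explicit $+3$ precisely for these: one pessimistic unit collision for each of Criteria~\ref{criterion:pathscollide}, \ref{criterion:loopcollision}, and \ref{criterion:selfcollision} (the loop of Criterion~\ref{criterion:loops} is noted not to increase $\Phi$). Your additive $4$ is instead spent on the terminal empty slot and on rebuilds, neither of which is where the constant is actually needed, so the bound as you argue it does not go through without reallocating and justifying those units.

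Two smaller points. First, your factor of $2$ in $2\epsilon|Q|k_i/n$ is derived from a doubling of element--slot pairs in the looping case; in the paper it comes simply from union-bounding over the two hash indices $\beta\in\{0,1\}$ for each pair $(q',x_{i_\lambda})$, which works uniformly whether or not $q_i$ loops. Second, your worry that Lemma~\ref{lem:initialfps} fails for looping queries is a non-issue here: the paper's proof of this lemma never invokes the two-phase path structure. It pessimistically assumes every moved element that \emph{could} collide with some $q'\in F_0$ under some hash does so in $C_i$ but not in $C_{i-1}$, which sidesteps the looping case entirely. Your finer two-phase analysis for the non-looping case is plausible and would give a slightly tighter constant there, but it buys nothing for this lemma and forces you into the looping-case contortions you describe.
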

  
\iffull
  \begin{proof}
  We upper bound the difference in $\Phi$ using a very pessimistic case: if some moved element $x_{i_\ell}$ collides with some $q'\in F_0$ under some hash $h_{\beta}$, then we assume that $x_{i_\ell}$ collides with $q'$ under $C_i$ but not $C_{i-1}$.  This leads to the bound
  \[
    \E\left[\Phi(i) - \Phi(i-1)\right] \leq 
      \E\bigg[\sum_{\ell = 1}^{k_i} |\{q\in F_0 ~|~ h_{\beta}(q) = h_{\beta}(x_{i_\ell}) \text{ for some } \beta\in\{0,1\} \}| ~\bigg|~q_i\text{ is costly}\bigg].
  \]

  If the path of $q_i$ collides with the path of $q_j$ (under Criterion~\ref{criterion:loopcollision}), then $x_{j_1}$ may collide with $q_j$ under $C_i$ and $C_0$; the same holds if the path of $q_i$ collides with the path of $q_j$ when queried on $C_0$ (under Criterion~\ref{criterion:pathscollide}).  If $q_i$ hashes to a slot along its path, it may collide with the moved element under both $C_0$ and $C_i$.  
If $q_i$ loops (under Criterion~\ref{criterion:loops}), then $q_i$ collides with $x_{i_1}$ under $C_{i-1}$ and may collide with $x_{i_1}$ under $C_i$ and $C_0$; however this does not increase $\Phi$.  
  We sum to obtain a (loose) upper bound of $3$ collisions from the cases listed in this paragraph.

  For any other pair $(q',\lambda)$ where $q'\in F_0$ and $\lambda\in \{1,\ldots k_i\}$, $q'$ and $x_{i_\lambda}$ collide under $\beta$ with probability $\epsilon/n$ for each hash index $\beta\in \{0,1\}$.  
  Summing via union bound over all $q',\lambda,\beta$ obtains 
  \[
      \E\bigg[\sum_{\ell = 1}^{k_i} |\{q\in F_0 ~|~ h_{\beta}(q) = h_{\beta}(x_{i_\ell}) \text{ for }
      \text{some } \beta\in\{0,1\} \}| ~\bigg|~q_i\text{ is costly}\bigg] \leq 3 + (2k_i|Q|)(\epsilon/n)
  \]
Adding in the cost of the false positive $q_i$, the expected amortized cost of $q_i$ is at most $4 + 2\epsilon|Q|k_i/n$.  
\end{proof}
\fi

The following lemma is dedicating to bounding the increase in potential from costly queries that satisfy Criterion~\ref{criterion:pathscollide}.  
These are by far the most common and expensive costly queries, so our analysis of this case needs to be tight up to (essentially) constants to obtain our desired bounds. 
That said, we did not focus on minimizing the constants themselves.

\begin{lemma}
  \label{lem:pathscollidebound}
  Assume that all initial false positives $q_i\in F_0$ satisfy $k_i^0 = O(\log n)$.  Then the expected number of queries satisfying Criterion~\ref{criterion:pathscollide} is at most $8|F_0|^2/n ~+~ 2\epsilon |Q| |F_0|/n$.  Furthermore, 
  \[
	  \E[k_i ~|~ \text{$q_i$ satisfies Criterion~\ref{criterion:pathscollide}}] \leq 96.
  \]
\end{lemma}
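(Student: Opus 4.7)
}
The plan is to split the expected number of queries satisfying Criterion~\ref{criterion:pathscollide} into two sums according to whether $q_i$ is or is not an initial false positive, bound each via linearity of expectation over pairs $(q_i,q_j)$, and then leverage the cuckoo-graph structure to control both the probability of path intersection and the conditional distribution of the path length $k_i$. Throughout, the hypothesis $k_i^0 = O(\log n)$ is used as a safety net: it rules out the low-probability rebuild event in which path lengths explode, letting me compute expectations on the ``good'' event and absorb the remaining bad event into lower-order terms.

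For the initial-false-positive contribution, I would sum over ordered pairs $(q_i,q_j) \in F_0 \times F_0$ with $q_j \neq q_i$ the probability that the $C_0$-paths of $q_i$ and $q_j$ share a slot. Since each path traces a chain of cuckoo evictions in $C_0$, and the hash values defining each intermediate slot are uniform and essentially independent, this probability is bounded by $2\,\E[k_i^0]\,\E[k_j^0]/n$ (the factor of $2$ accounting for the two hash tables). Classical cuckoo-hashing analysis in the $\gamma=2$ regime shows that path lengths decay geometrically, yielding $\E[k_i^0] \leq 2$; summing over pairs gives the $8|F_0|^2/n$ term. For the non-initial contribution, I would sum over pairs $(q_i,q_j) \in Q \times F_0$ the probability that one of $q_i$'s two location hashes lands on a slot of $q_j$'s $C_0$-path \emph{and} $q_i$'s fingerprint matches the element stored there. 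Each slot along the path contributes $2\epsilon/n$, and again $\E[k_j^0]\le 2$, producing the $2\epsilon|Q||F_0|/n$ term.

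For the conditional expectation $\E[k_i \mid q_i \text{ satisfies Criterion~\ref{criterion:pathscollide}}]$, my plan is to use the fact that $k_i$ essentially measures the length of a cuckoo-eviction chain emanating from $q_i$, which, by the same subcritical random-graph analysis, is stochastically dominated by a geometric variable of constant mean. Conditioning on a path collision with some other $q_j$ (resp.\ on $q_i$ hashing into $q_j$'s path) forces the first step of $q_i$'s chain to land in a specified slot, but the remaining steps are generated by fresh, independent hashes of the elements cuckoo'd thereafter and so retain their geometric-decay behavior. I would decompose over the location of the collision slot, show that the law of $k_i^0$ conditioned on one fixed initial step is stochastically dominated by an unconditional geometric, and then bound the ``extra'' length contributed by the segment after the collision using the same argument. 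Tracking the small constants (a factor $2$ for the two hash tables, the constant $2$ bound on $\E[k_i^0]$ applied twice, and the enlargement factor from conditioning on a second vertex lying in the component), the resulting bound is absorbed comfortably into $96$.

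The hardest step will be the conditional analysis in the final paragraph, because path-collision events correlate the two paths' internal randomness in nontrivial ways: the slots visited by $q_j$ depend on the hashes of the cuckoo'd elements, and some of those very elements may subsequently be cuckoo'd by $q_i$. To handle this, I would argue along a revealed-randomness timeline, exposing the hashes one eviction step at a time and showing that at every step the conditional next-step distribution is still supported on a uniformly random alternative slot, so that geometric decay of the remaining chain length is preserved. This is the point where constants must be tracked carefully, since the whole lemma statement is tight only up to constants.
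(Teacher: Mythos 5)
Your proposal matches the paper's proof in all essentials: the same split into initial versus non-initial false positives, the same pairwise union bound over $F_0\times F_0$ (resp.\ $Q\times F_0$) driven by the geometric tail $\Pr[k_i^0=\hat k]\le 2^{2-\hat k}$ of cuckoo eviction paths, and the same size-biasing observation for the conditional expectation (the paper computes $\E[k_i\mid\cdot]$ directly via Bayes, bounding the joint probability by a quantity proportional to $k'^2 2^{-k'}$ and lower-bounding the conditioning event, whereas you phrase it as stochastic domination after decomposing over the collision location --- but the content is identical). The only caveat is constant tracking: the stated constants $8$ and $96$ follow from the particular Pagh--Rodler tail bound rather than from your asserted $\E[k_i^0]\le 2$, though since the paper's own derivation of the first term actually yields $64|F_0|^2/n$, your constants are no further from the statement than the paper's.
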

\iffull
\begin{proof}
  We begin by bounding the number of queries $q_i\in F_0$ that satisfy Criterion~\ref{criterion:pathscollide}.  
  Taking a union bound, for any $q_j\in F_0$, the probability that an element moved when querying $q_i$ shares a slot with an element moved when querying $q_j$ is $k_i^0 k_j^0/n$.  
  So long as $q_i$ and $q_j$ are moving distinct elements when being fixed under $C_0$, the hashes of each element are independent, and therefore the number of elements each moves is independent.%
\footnote{Strictly speaking, these events are not independent.  After all, if $k^0_i = n$, then $k^0_j$ is very likely to be $\Omega(n)$.  However, since we assume $k^0_i = O(\log n)$ and $k^0_j = O(\log n)$, only $O(\log n)$ hash table slots are touched by either path; this affects the expectation of each by at most $O((\log n)/n)$.}
  If $q_i$ and $q_j$ ever move the same element, they satisfy Criterion~\ref{criterion:pathscollide} and we are done.
  Thus, we can upper bound the expected number of pairs $q_i,q_j\in F_0$ satisfying Criterion~\ref{criterion:pathscollide} by $\sum_{q_i,q_j\in F_0} \E[k_i^0 k_j^0]/n$.
  
  From~\cite[Section 3.1.2]{PaghRodler04} we have (in the notation of this paper, and recalling that the filter may access fully random (i.e. $(1,n)$-universal) hash functions, and recalling that we assume that $q_i$ and $q_j$ collide with at least one element), $\Pr[k^0_i = \hat{k}] \leq 2(1/\gamma)^{-\hat{k} - 1} = 2^{2-\hat{k}}$ for $\hat{k} \geq 1$.  As above, we can assume that $k^0_i$ and $k^0_j$ are independent for the purpose of our upper bound. Summing,
  \begin{align*}
    \E[k_i^0 k_j^0] &\leq \sum_{k_i^0 = 1}^{\infty} \sum_{k_j^0 = 1}^{\infty} (k_i^0) (k_j^0) 2^{2-k_i^0} 2^{2- k_j^0}\\
    &= 16\sum_{k_i^0 = 1}^{\infty}(k_i^0) 2^{-k_i^0}\left(  \sum_{k_j^0 = 1}^{\infty}  (k_j^0)  2^{ - k_j^0}\right)\\
    &\leq 32  \sum_{k_i^0 = 1}^{\infty}(k_i^0) 2^{-k_i^0} \leq 64
  \end{align*}
  Thus, the expected number of pairs that satisfy Criterion~\ref{criterion:pathscollide} is at most $64|F_0|^2/n$ for $n \geq 64$.

  We now consider the case when $q_i\notin F_0$.  This means that $q_i$ must hash to the same slot as an element moved by some $q_j$ when $q_j$ is being fixed on $C_0$; for a given element this occurs with probability $\epsilon/n$.  This means that the probability that $q_i$ satisfies Criterion~\ref{criterion:pathscollide} is at most $\sum_{q_j\in F_0} \epsilon k_j^0/n$.  Summing obtains $2\epsilon |Q| |F_0|/n$ expected false positives over all $q_i\in Q$.

  Finally, we bound $\E[k_i ~|~ \text{$q_i$ satisfies Criterion~\ref{criterion:pathscollide}}]$.  By definition, 
  \begin{align*}
	  \E[k_i ~|~ \text{$q_i$ satisfies Criterion~\ref{criterion:pathscollide}}] &= \sum_{k' = 1}^{\infty} k' \Pr[k_i = k' ~|~ \text{$q_i$ satisfies Criterion~\ref{criterion:pathscollide}}]\\
																				&= \frac{\Pr[k_i = k' \text{ and $q_i$ satisfies Criterion~\ref{criterion:pathscollide}}] } {\Pr[\text{$q_i$ satisfies Criterion~\ref{criterion:pathscollide}}] }
\end{align*}
  As above, 
    \begin{align*}
      \Pr[k_i = k' \text{ and $q_i$ satisfies Criterion~\ref{criterion:pathscollide}}]  &\leq k' 2^{2-k'} \sum_{q_j\in F_0} \sum_{k_j^0 = 0}^{\infty} k_j^0 2^{2-k_j^0}/n \\
      &\leq 4 k' 2^{-k'} |F_0| \sum_{k_j^0 = 0}^{\infty} k_j^0 2^{-k_j^0}/n \\
      &\leq 8  |F_0| k' 2^{-k'}/n
      \end{align*}
  A lower bound for $\Pr[q_i \text{ satisfies Criterion~\ref{criterion:pathscollide}}]$ is the event that 
  $h^{\ell}_{1-C[j_1]}(q_i) = h^\ell_{1-C[j_1]}(x_{j_1})$.  
  This occurs with probability $ 1 - \left( 1- 1/n\right)^{|F_0|} \geq 1 - 1/2^{|F_0|/n} $.
  Thus,
  \begin{align*}
    \E[k_i ~|~ \text{$q_i$ satisfies Criterion~\ref{criterion:pathscollide}}] &= \sum_{k' = 1}^{\infty} k' \Pr[k_i = k' ~|~ \text{$q_i$ satisfies Criterion~\ref{criterion:pathscollide}}]\\
    &\leq \sum_{k' = 1}^{\infty} k' \frac{8  |F_0| k' 2^{-k'}/n} {1 - 1/2^{|F_0|/n}}\\
    &\leq 8 \frac{|F_0|/n} {1 - 1/2^{|F_0|/n}} \sum_{k' = 1}^{\infty} k'^2 2^{-k'} \\
    &\leq 16  \sum_{k' = 1}^{\infty} k'^2 2^{-k'} \leq 96\qedhere
  \end{align*} 
\end{proof}
\fi

Now we bound the total cost of all costly queries.  The additive polylog term is due to applying tail bounds to rare but problematic events like looping queries---it is possible that a sufficiently involved analysis of these events would reduce or remove this term.

\iffull
\begin{lemma}
  \label{lem:boundcost}
  For any sequence of queries $q_1,\ldots, q_{n}$ 
  \[
    \E\left[\sum_{\substack{q_i\text{ is costly}\\1\leq i\leq n}} (4 + 2\epsilon|Q|k_i/n)\right] \leq O(\epsilon^2|Q|^2/n + \log^4 n).
  \]
\end{lemma}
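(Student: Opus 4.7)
\begin{proofof}{Lemma~\ref{lem:boundcost} (sketch)}
The plan is to partition costly queries according to which of Criteria~\ref{criterion:pathscollide}--\ref{criterion:selfcollision} they satisfy, bound each group's total contribution separately, and then sum.  Write the total cost as $\sum_{t=1}^4 S_t$, where $S_t = \sum_{i} \mathbb{I}[q_i \in T_t](4 + 2\epsilon|Q|k_i/n)$ and $T_t$ is the set of costly queries whose lowest-indexed satisfied criterion is $t$.  Throughout we condition on the high-probability event $\mathcal{E}$ that $k_i^0 \leq c\log n$ for every $q_i \in F_0$; by the standard Cuckoo Hashing tail bound $\Pr[k_i^0 = \hat k] \leq 2^{2-\hat k}$ and a union bound, $\Pr[\overline{\mathcal{E}}] \leq O(1/n^{c-1})$, and even if $\overline{\mathcal{E}}$ occurs the total cost is bounded by $\mathrm{poly}(n)$, contributing at most $O(1)$ to the expectation for $c$ large enough.

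The main case is $T_1$: applying Lemma~\ref{lem:pathscollidebound}, $\E[|T_1| \mid \mathcal{E}, F_0] \leq 8|F_0|^2/n + 2\epsilon|Q||F_0|/n$ and the conditional bound $\E[k_i \mid i \in T_1] \leq 96$ holds uniformly in $i$, so $\E[\sum_{i\in T_1}k_i] \leq 96\,\E[|T_1|]$.  I then need $\E[|F_0|^2]$; because $|F_0| = \sum_{q\in Q}\mathbb{I}[q\in F_0]$ with each indicator having mean at most $\epsilon$ and the pairs being nearly independent under random hashes, a routine variance calculation yields $\E[|F_0|^2] \leq (\epsilon|Q|)^2 + O(\epsilon|Q|)$.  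Substituting gives $\E[S_1] \leq (4 + O(\epsilon|Q|/n))\cdot O(\epsilon^2|Q|^2/n + \epsilon|Q|/n) = O(\epsilon^2|Q|^2/n + 1)$ under the working assumption $\epsilon|Q|\leq n$ (otherwise the bound is trivial since the filter itself has at most $O(\epsilon n)$ false positives in the naive analysis).

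For $T_2$ (loops), standard Cuckoo Hashing analysis with $\gamma = 2$ gives $\Pr[q_i \text{ loops}] = O(1/n)$, and any looping query has $k_i \leq O(\log n)$ because the filter aborts at the rebuild threshold.  Summing over $n$ queries, $\E[|T_2|] = O(1)$ and each contributes at most $4 + 2\epsilon|Q|\log n/n = O(\log n)$, so $\E[S_2] = O(\log n)$.  For $T_3$, I bound $\Pr[q_i \in T_3]$ by summing over potential looping queries $q_j$: each $q_j$ loops with probability $O(1/n)$ and, conditioned on looping, its path touches $O(\log n)$ slots, each of which $q_i$ hashes to with probability $O(1/n)$; so $\Pr[q_i \in T_3] \leq O(\log n/n)$ and $\E[|T_3|] \leq O(\log n)$.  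Conditioning on $T_3$ does not blow up $\E[k_i]$ more than a constant factor (the cuckoo tail bound still applies to $q_i$'s own path, which is independent of the event that triggered Criterion~\ref{criterion:loopcollision} up to the collision), giving $\E[S_3] = O(\log n \cdot (1 + \epsilon|Q|/n)) = O(\log n)$.  Finally, for $T_4$, $\Pr[q_i \in T_4 \mid k_i = k] \leq 2k/n$, so $\Pr[q_i \in T_4] \leq 2\E[k_i]/n = O(1/n)$, giving $\E[|T_4|] = O(1)$ and $\E[S_4] = O(1)$.

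Summing all four cases and adding the $O(1)$ contribution from $\overline{\mathcal{E}}$ yields the claimed $O(\epsilon^2|Q|^2/n + \log^4 n)$ bound, with plenty of slack in the polylog factor.  The principal obstacle I anticipate is the $T_1$ analysis: Lemma~\ref{lem:pathscollidebound} gives $\E[k_i \mid i \in T_1]$ as an average statement, so one has to justify that it upgrades to the per-$i$ bound required for $\E[\sum_{i \in T_1} k_i]$, and one must carefully argue approximate independence of the indicators defining $|F_0|$ and of the cuckoo path lengths of distinct queries under $C_0$ --- the very independence assumptions flagged in the footnote of Lemma~\ref{lem:pathscollidebound}.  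A secondary subtlety is that conditioning on a criterion can correlate the length of $q_i$'s path with the collision event, and the proof must show this correlation affects $\E[k_i \mid \text{criterion}]$ only by a constant factor using the geometric tail of cuckoo path lengths.
\end{proofof}
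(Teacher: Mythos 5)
Your proposal is correct and follows essentially the same route as the paper's proof: condition on the high-probability event that all cuckoo paths have length $O(\log n)$, split the costly queries by criterion, handle Criterion~\ref{criterion:pathscollide} by combining Lemma~\ref{lem:pathscollidebound} with the variance computation $\E[|F_0|^2]\leq \epsilon^2|Q|^2+\epsilon|Q|$, and bound Criteria~\ref{criterion:loops}--\ref{criterion:selfcollision} by looping/collision probabilities. The only real deviation is your assertion that a \emph{single} loop occurs with probability $O(1/n)$ ``by standard Cuckoo Hashing analysis'' --- the standard analysis gives $O(1/n^2)$ only for double loops, and the paper instead uses the cruder union bound $k_i^2/n=O(\log^2 n/n)$ (which is why its exponents on the polylog terms are larger); your sharper claim is salvageable via $\E[k_i^2]/n=O(1/n)$ using the geometric tail of path lengths, and either estimate fits comfortably inside the stated $O(\log^4 n)$.
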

\fi
\iffull
\begin{proof}
  From classic Cuckoo Hash analysis, $k_i = O(\log n)$ for all $i$ with probability at least $1-1/n$ (see i.e.~\cite{PaghRodler04}).  We upper bound the number of costly $q_i$ by summing the expected number of $q_i$ meeting each criterion.  Assume that $k_i = O(\log n)$ and $k^0_i = O(\log n)$ for all $i$ (if not, then at most all $n$ queries are false positives; since this edge case happens with probability $\leq 2/n$ this adds $2$ to the expected cost).  
  
  \textbf{Criterion~\ref{criterion:pathscollide}:} 
  From Lemma~\ref{lem:pathscollidebound}, the expected number of queries satisfying Criterion~\ref{criterion:pathscollide} is at most $8|F_0|^2/n + 2\epsilon |Q| |F_0|/n$.  Substituting $\E[k_i ~|~ \text{ $q_i$ satisfies Criterion~\ref{criterion:pathscollide}}] \leq 96$ in Lemma~\ref{lem:costlycost}, each of these queries costs at most $(3 + 192\epsilon|Q|/n) = O(1)$; thus the total cost is $O(|F_0|^2/n + \epsilon |Q| |F_0|/n)$.  

  To bound $E[|F_0|^2]$, notice that this expression represents the square of the sum of $|Q|$ independent Bernoulli trials, each of which succeeds with probability $\epsilon$.  Thus, the variance of $|F_0|$ is $\epsilon|Q|(1-\epsilon)$; substituting $\E[|F_0|] = \epsilon|Q|$ into $\text{Var}(X) = \E[X^2] - (\E[X])^2$ obtains $\E[|F_0|^2] \leq \epsilon^2|Q|^2 + \epsilon(1-\epsilon)|Q|$. By linearity of expectation, $\E[ |Q| |F_0|/n] \leq \epsilon |Q|^2/n$.  Summing obtains a total cost of $O(\epsilon^2 |Q|^2/n)$.

  \textbf{Criterion~\ref{criterion:loops}:} Query $q_i$ loops if and only if there exists an element $x_{i_\ell}$ (with $1\leq\ell\leq k_i$) such that the second hash location of $x_{i_\ell}$ collides with another one of the $k_i$ slots containing elements moved by $q_i$.  For a given slot $\sigma$, the probability that $h^\ell_{C_i[i_\ell]}(x_{i_\ell}) = \sigma$ is $1/n$. Taking a union bound, $q_i$ loops with probability at most $k_i^2/n$.  Summing over all $2n$ queries, the expected number of queries that loop is $O(\log^2 n)$.   Since $k_i = O(\log n)$, Lemma~\ref{lem:costlycost} gives that each costs $O(\log n)$, giving a total cost of $O(\log^3 n)$.

  \textbf{Criterion~\ref{criterion:loopcollision}:}  If there are $L$ queries that loop, since each looping query moves $O(\log n)$ elements, the probability that $q_i$ collides with any element moved by a looping query is $O(L\log n/n)$.  By linearity of expectation, since $\E[L] = O((\log n)^2)$ (shown in the proof for Criterion~\ref{criterion:loops}), summing over all $q_i$, at most $O(\log^3 n)$ queries meet Criterion~\ref{criterion:loopcollision} in expectation.  Using Lemma~\ref{lem:costlycost} obtains a total cost of $O(\log^4 n)$.

  \textbf{Criterion~\ref{criterion:selfcollision}:} Each element $x_{i_\ell}$ moved when querying $q_i$, from $1\leq\ell\leq k_i$, hashes to the same slot as $q_i$ with probability $1/n$; thus $q_i$ collides with an element after it is moved with probability $O(\log n)/n$.  Summing over all queries, there are $O(\log n)$ queries that are costly in expectation by Criterion~\ref{criterion:selfcollision}; using Lemma~\ref{lem:costlycost} obtains a total cost of $O(\log^2 n)$.

  Summing, all costly queries have a total cost of $O(\epsilon|Q|^2/n + \log^4 n)$.
\end{proof}
\fi

Summing the starting potential cost (Lemma~\ref{lem:potential}), the total amortized cost of all queries that are not costly (Lemma~\ref{lem:noncostlycost}), and the total amortized cost of all costly queries (Lemma~\ref{lem:boundcost}), we obtain Theorem~\ref{thm:distinct}.
\fi

\section{Experiments}
\label{sec:experiments}

In this section, we examine how the Cuckooing ACF performs on network trace datasets.  There are two main takeaways from this section.  First, the design of the Cuckooing ACF results in better practical performance than previous adaptive filters on network trace datasets.  Second, the analysis of Section~\ref{sec:upper} extends to practice: an adaptive cuckoo filter with practical parameter settings (including very high load factor) still achieves strong performance.

\begin{figure}[ht]
	\vspace{-.2in}
	\includegraphics[width=.49\textwidth]{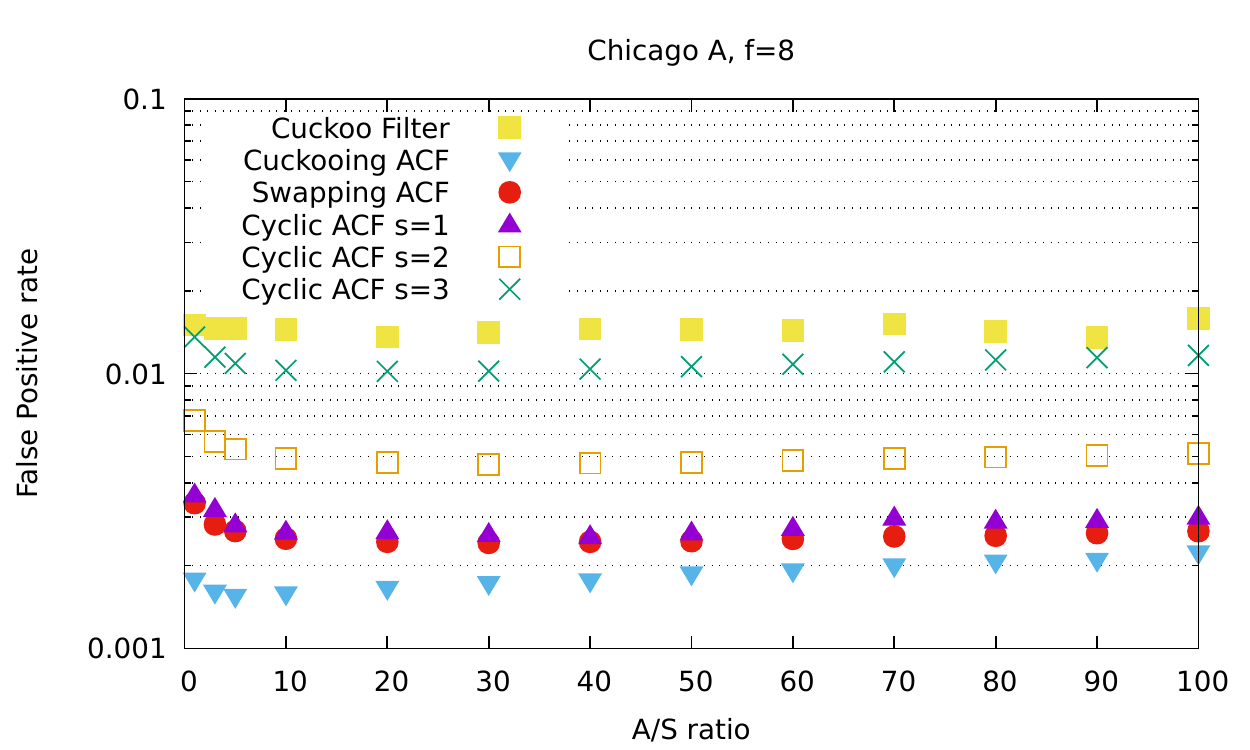}
	\iffull
	\includegraphics[width=.49\textwidth]{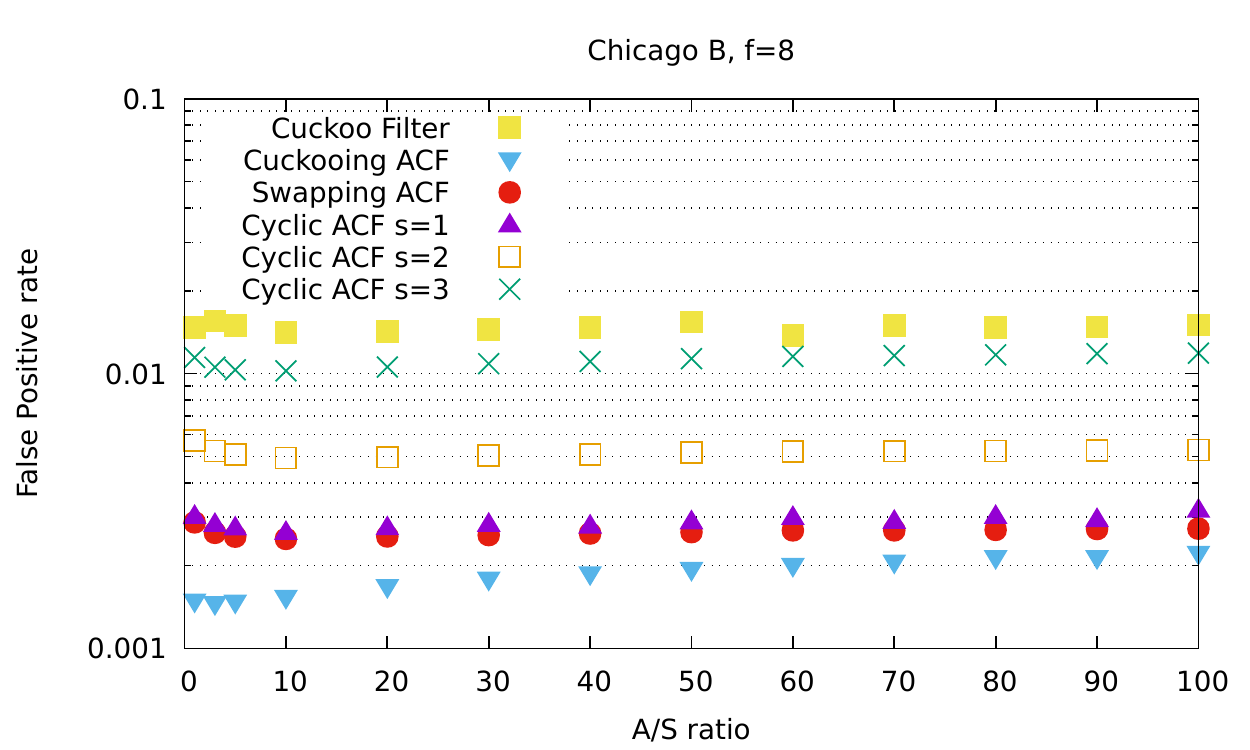}
	\includegraphics[width=.49\textwidth]{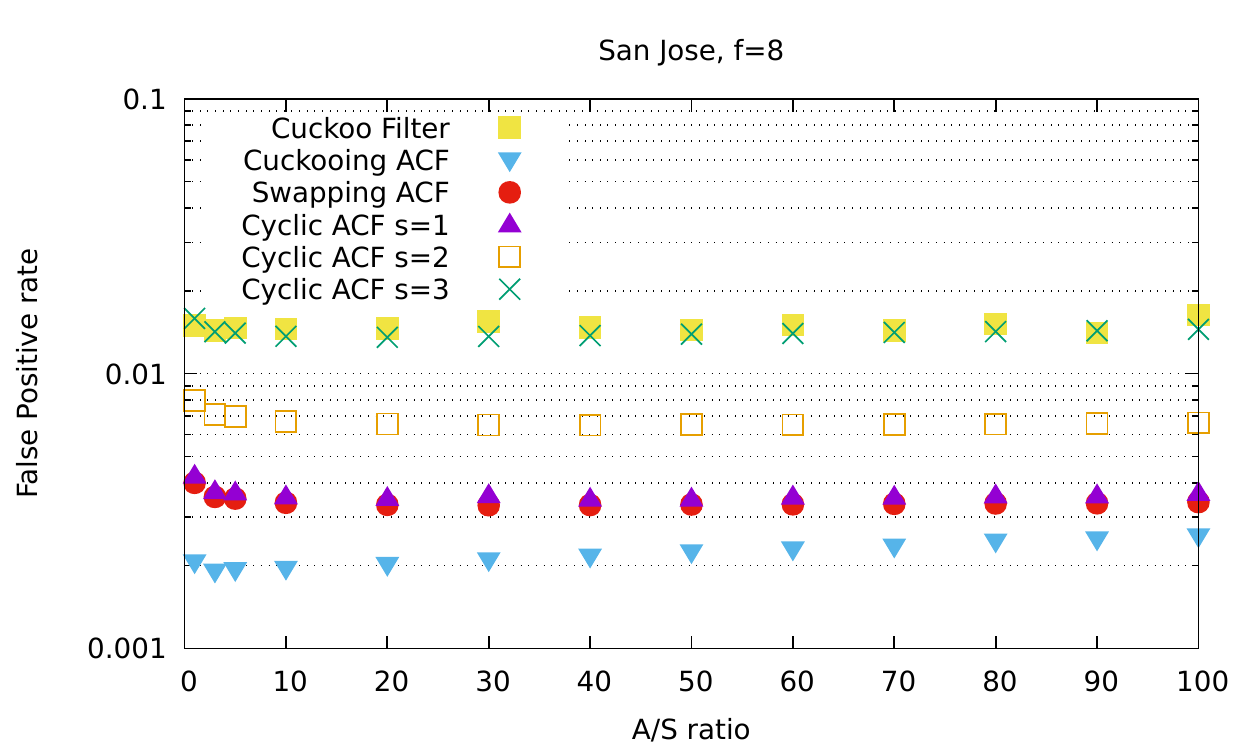}
	\fi
	\includegraphics[width=.49\textwidth]{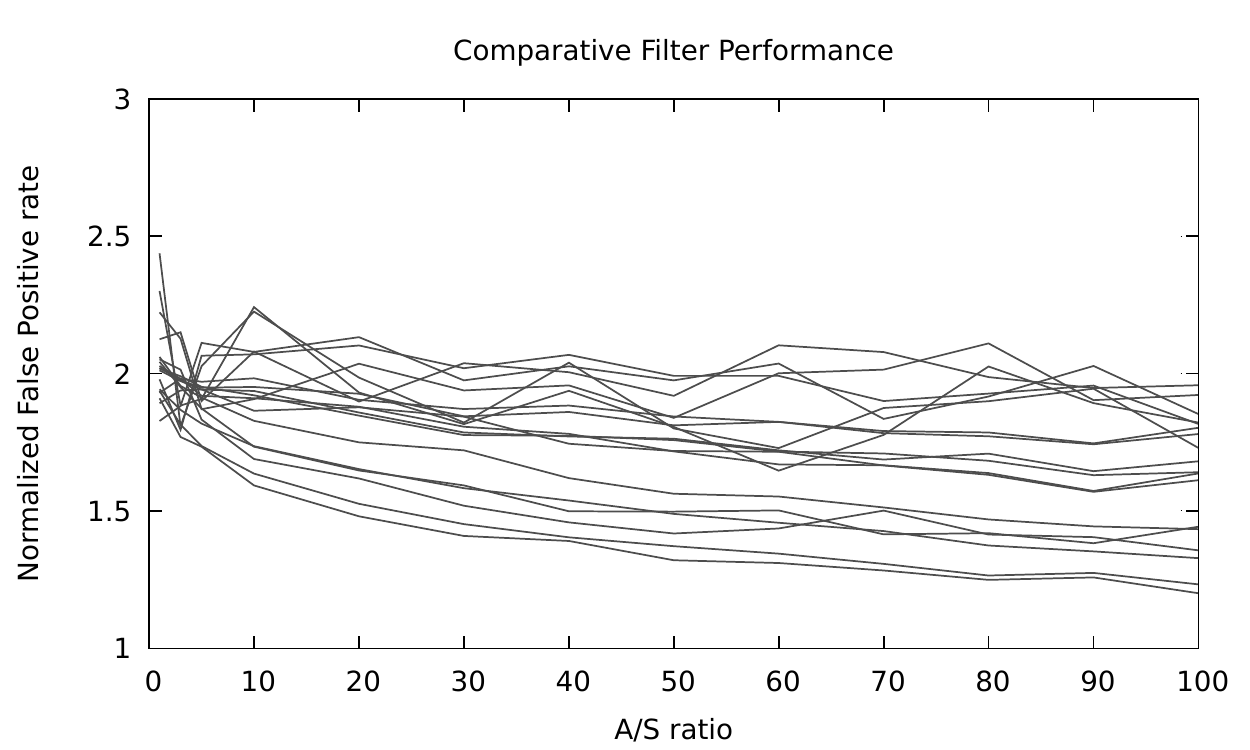}
        \caption{We examine the false positive rate of each adaptive filter, varying the ratio of the number of queries to the number of stored elements.  The bottom right figure normalizes the false positive rate by the number of false positives incurred by the Cuckooing ACF\@.  It summarizes the results for the Swapping ACF and the Cyclic ACF with $s=1$, for all three datasets, for $f=8,12,16$.} 
	\label{fig:experiments}
\end{figure}

Our experiments use three network traces from the CAIDA 2014 dataset, specifically:
\begin{itemize}[noitemsep, topsep=1pt]
	\item equinix-chicago.dirA.20140619 (which we call ``Chicago A'') 
	\item equinix-chicago.dirB.20140619-432600 (``Chicago B''), and 
	\item equinix-sanjose.dirA.20140320-130400 (``San Jose'') 
\end{itemize}
following the experiments of Mitzenmacher et al. in~\cite{MitzenmacherPo17}.   
In each test, the elements stored in the filter and the query elements both come from the network trace.  Let $A$ be the set of query elements.
We perform tests for different $|A|/|S|$ ratios; specifically $|A|/|S| = \{1,3,5,10,20,30,40,50,60,70,80,90,100\}$.

We begin by setting $n = |S|$ using the prescribed $|A|/|S|$ ratio and the total number of unique flows in the trace---in particular, $n = (\text{\# unique flows})/(1 + |A|/|S|)$.  The first $n$ unique flows seen in the trace are inserted into each filter.  The remaining flows in the trace (those not in $S$) are used as queries.  

We consider six data structures in our experiments: 
\begin{itemize}
\item a classic Cuckoo Filter, with $k=4$ hash tables and $b=1$ slots per bucket;
\item the Cuckooing ACF, with $k=4$ hash tables and $b=1$ slots per bucket;
\item three implementations of the Cyclic ACF described in~\cite{MitzenmacherPo17}, with $s=1$, $s=2$, and $s=3$ hash selector bits.  To ensure a fair comparison in space usage, each hash selector bit used is accounted for with a corresponding decrease in the number of fingerprint bits (for example, a Cuckoo Filter with fingerprints of length $8$ is compared to a Cyclic ACF with fingerprints of length $7$ and $s=1$ hash selector bit); and
\item a Swapping ACF as described in~\cite{MitzenmacherPo17}, with $b=4$ slots per bin and $k=2$ hash tables.
\end{itemize}

All filters are at 95\% occupancy in all of our experiments (i.e. $\gamma = 1/.95$).  
We give results for fingerprints of length $f=8$ bits on Chicago A, and summarize key results for Chicago B and San Jose, as well as results with $f=12$ and $f=16$ bits on all datasets.
We repeat each experiment 10 times; all results given are the average performance over these 10 trials.

\subsection{Experimental Results}

\iffull
The three plots in Figure~\ref{fig:experiments} excluding the bottom-right plot show that the Cuckooing ACF has the strongest performance of all adaptive filters on network trace datasets with fingerprints of size $f=8$.  
\fi
The left hand plot in Figure~\ref{fig:experiments} show that the Cuckooing ACF has the strongest performance of all adaptive filters on the Chicago A dataset with $f=8$.
Its performance is particularly strong for low values of $|A|/|S|$---that is to say, its performance is strong when the number of queries is small relative to $n$ (as one may expect given Theorem~\ref{thm:distinct}).

We ran further experiments, using fingerprints of size $f=8$, $f=12$, and $f=16$ on Chicago A, Chicago B, and San Jose datasets, 
achieving similar (in fact slightly better) results.  These experiments are summarized in the right hand plot of Figure~\ref{fig:experiments}; we also give full results in Appendix~\ref{sec:networkexperiments}.  The y-axis in this figure indicates the false positive rate of the given filter divided by the false positive rate of the Cuckooing ACF.  This plot only includes the two best filters: the Swapping ACF, and the Cyclic ACF with $s=1$.  We run the experiments for three fingerprint sizes $\{8,12,16\}$ on all three datasets, giving $18$ total lines in the plot.  
Note that there is some overlap with the left hand plot---one of the bottommost two lines in the plot correspond to the Swapping ACF with $f=8$ on Chicago A.  Specifically, the Cuckooing ACF does even better with larger fingerprints like $f=12$ and $f=16$ compared to $f=8$.

Overall, the Cuckooing ACF always performs at least as well as every other cuckoo filter on these datasets, frequently outperforming them by nearly a factor of $2$.  

\iffull
We believe that the simplicity of the Cuckooing ACF is, in fact, the source of this performance improvement.
The Swapping ACF uses two bins of size $4$, so each query is compared to twice as many fingerprints as in the Cuckooing ACF or Cuckoo Filter.\footnote{If the bins are decreased in size to, say, $2$, the Swapping ACF requires extremely frequent rebuilds with the desired load factor---the factor-2 decrease in efficiency is intrinsic to the Swapping ACF.} Meanwhile, the adaptivity bit of the Cyclic ACF causes each fingerprint to be one bit shorter, again doubling the false positive rate.  Thus, the simple swapping strategy means we can avoid metadata bits with $b=1$, giving the Cuckooing ACF has an immediate factor-2 advantage over its competitors.  
\fi

\section{Adersarial Adaptivity}
\label{sec:lower}

Previous work leaves a dichotomy: the Adaptive Cuckoo Filters of Mitzenmacher et al.~\cite{MitzenmacherPo17} work well in practice, whereas the ``Broom Filter'' of Bender et al.~\cite{BenderFaGo18} is effective even against an adversary that tries to ``learn'' a filter's internal state.  In Sections~\ref{sec:upper} and~\ref{sec:experiments} we showed that the Cuckooing ACF is practical while retaining theoretical bounds.  But  our theoretical bounds are not adversarial; they are based on the number of unique queries made to the filter.  Can our results be taken further---is there an ACF that adapts effectively even against an adversary?

In this section we give a general lower bound showing that 
an adversary can obtain a false positive rate of $\Omega(1)$ against any space-efficient ACF.  This result is closely tied to a key structural distinction: the Broom Filter is difficult to implement because the length of the stored fingerprint may be different for each element.  Our lower bound shows that this flexibility is, in fact, necessary in order to achieve adaptivity.

\subsection{Definition}
Bender et al.~\cite{BenderFaGo18} defined a notion of adaptivity that captures a worst-case adversary attempting to maximize the filter's false positive rate.  
We summarize this model in this subsection, and refer readers to~\cite{BenderFaGo18} for a more thorough discussion.

In the \defn{adaptivity game}, an adversary generates a sequence of queries.  After each query $q$, the adversary and filter both learn if $q$ was a false positive.  The filter may change its internal representation in response. The adversary will use whether or not $q$ was a false positive to determine the further queries.

At any time, the adversary may name a special element $\hat{q}$---the adversary is asserting that this query is likely to be a false positive.  The adversary ``wins'' if $\hat{q}$ is a false positive, and the filter ``wins'' if $\hat{q}$ is not a false positive.

The \defn{sustained false positive} rate of a  filter is the maximum probability $\epsilon$ with which the adversary can win the adaptivity game.  We call a filter $\mathcal F$ \defn{adaptive} if $\mathcal F$ can achieve a sustained false positive rate of $\epsilon$ for \emph{any} constant $\epsilon$.

\subsection{Lower Bounds}

To begin, we note that the Cyclic ACF is not adaptive. A nearly-identical proof shows that the Swapping ACF is not adaptive.

\begin{theorem}
  \label{thm:cycliclower}
	Let $\mathcal{F}$ be a Cyclic ACF with $k=O(1)$ hash tables, each with $N = \Theta(n)$ slots.  
  Then there exists an adversarial strategy, making $\Theta(2^s/\epsilon^{2^s})$ queries, which wins the adaptivity game against $\mathcal{F}$ with probability $\Omega(1)$.  Thus the sustained false positive rate of $\mathcal{F}$ is $\Omega(1)$.
\end{theorem}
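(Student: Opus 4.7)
The plan is to exhibit an adversary that searches for a \emph{permanent} false positive, meaning a query $q\notin S$ such that some $x_i\in S$ shares a location hash with $q$ and satisfies $h^f(q,\alpha)=h^f(x_i,\alpha)$ for every selector value $\alpha\in\{0,\ldots,2^s-1\}$. Once such a $q$ is identified, naming $\hat q=q$ wins the adaptivity game, because the only mechanism the Cyclic ACF has for fixing a false positive is to increment the $s$ selector bits of the colliding slot, and by construction $q$ collides with $x_i$ under every one of the $2^s$ possible selector values.

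First I would estimate the probability that a uniformly random element $q\notin S$ is permanent with respect to some $x_i\in S$. Since the Cyclic ACF uses $f=\log(k/\epsilon\gamma)=\log(1/\epsilon)+O(1)$ bit fingerprints, the probability that $q$ agrees with $x_i$ on all $2^s$ fingerprint hashes is $2^{-f\cdot 2^s}=\Theta(\epsilon^{2^s})$. Multiplying by the $\Theta(k/N)$ probability that $q$ hashes into the bin storing $x_i$ under some location hash, and summing over all $n=\Theta(N)$ elements of $S$, gives a probability $\Omega(\epsilon^{2^s})$ that a random $q$ is a permanent collision for at least one $x_i$. Since $k$ is constant and the hashes for distinct elements are independent, an inclusion–exclusion argument shows this bound is tight up to constants and, more importantly, stays bounded away from $1$, so independent random trials of candidate queries succeed independently.

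Next I would describe the verification procedure. Given a candidate $q$, the adversary queries $q$ exactly $2^s$ times. Since $b=1$ and only one element $x_i$ can share a bin with $q$ under each location hash, each false-positive response increments the selector bits of that one slot, and the $2^s$ consecutive queries cycle through every selector value. If every query returns \present, then $q$ is permanent and the adversary sets $\hat q=q$; otherwise the adversary discards $q$ and tries a fresh random candidate. A key observation is that discarded candidates do not destroy future permanent collisions, because permanence of $(q,x_i)$ is a property of the hash functions alone and is invariant to any reordering of the selector bits created by earlier queries.

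Combining the two estimates, in expectation $\Theta(1/\epsilon^{2^s})$ candidates are needed before a permanent one appears, each using $2^s$ queries, for a total of $\Theta(2^s/\epsilon^{2^s})$ queries; Markov's inequality (or a standard repetition argument) then yields an $\Omega(1)$ success probability, bounding the sustained false positive rate of $\mathcal{F}$ below by a constant. The main obstacle I anticipate is carefully justifying that the $\Omega(\epsilon^{2^s})$ per-candidate success probability aggregates across trials despite the filter's evolving state: this requires a conditioning argument showing that the event ``$q$ is permanent for some $x_i$'' depends only on the fixed hash functions and on $S$, so that across independently drawn candidates the successes are mutually independent Bernoulli trials regardless of what selector-bit updates took place in between.
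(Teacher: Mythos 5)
Your proposal is correct and follows essentially the same route as the paper: draw random candidates, query each one $2^s$ times so that the selector bits cycle through all values, declare $\hat q$ when a candidate is a false positive every time (which certifies a ``permanent'' collision $h^f(q,\alpha)=h^f(x_i,\alpha)$ for all $\alpha$), and use the $\Theta(\epsilon^{2^s}/n)$ per-pair collision probability over $\Theta(1/\epsilon^{2^s})$ candidates to get constant success probability. Your added remark that permanence is a property of the fixed hash functions alone, hence unaffected by the filter's evolving selector state, is exactly the (implicit) justification the paper relies on as well.
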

\begin{proof}
  The adversary begins by selecting a query element $q_1$ at random.  The adversary queries $q_1$ $2^s$ times.  If $q_1$ is a false positive every time it is queried, the adversary sets $\hat{q} \gets q_1$; otherwise the adversary chooses a new query element $q_2$ and repeats.  This process is repeated until $O(1/\epsilon^{2^s})$ query elements have been chosen, requiring $O(2^s/\epsilon^{2^s})$ queries overall.

  We show that the adversary finds a $\hat{q}$ with probability $\Omega(1)$, and that $\hat{q}$ will be a false positive with probability $1$.

  Each time $q$ collides with an element $x_i\in S$,
  the hash selector bits associated with $x_{i}$ are incremented; thus, if $q$ does not collide with $x_i$ on the $j$th query, it will not collide on the $j'$th query for $j' > j$.  Then if $q$ is a false positive on all $2^s$ collisions, there is an $x_j\in S$ such that $h^f(q,\alpha) = h^f(x_j,\alpha)$ for all $\alpha\in\{0,\ldots, 2^s-1\}$.  
  We immediately obtain that any $\hat{q}$ found by the adversary is a false positive with probability $1$.

  We now bound the probability that $\hat{q}$ is found.
  For a given query element $q_i$ and a given element $x_j\in S$, the probability that $h^{\ell}_{\beta_j}(q_i) = h^\ell_{\beta_j}(x_j)$ is $1/n$. The probability that, for all $\alpha$, $h^f(q_i,\alpha) = h^f(x_j,\alpha)$ is $1/\epsilon^{2^s}$.  

  Thus, after the algorithm has considered $1/\epsilon^{2^s}$ query elements, the probability that there exists a query element $q'$ and an $x^*\in S$ such $x^*$ causes $q'$ to be a false positive $2^s$ times is $1 - (1 - \epsilon^{2^s}/n)^{n/\epsilon^{2^s}} = \Omega(1)$.  Thus, the adversary finds a $\hat{q}$ with constant probability.  
\end{proof}

\iffull
We can also show that the Swapping ACF is not adaptive; this proof is essentially identical to that of Theorem~\ref{thm:cycliclower}.

\begin{theorem}
  \label{thm:swappinglower}
  Let $\mathcal{F}$ be a Swapping ACF with $N = \Theta(n)$ slots.  
  Then there exists an adversarial strategy, making $\Theta(b/\epsilon^{b})$ queries, which wins the adaptivity game against $\mathcal{F}$ with probability $\Omega(1)$.  
\end{theorem}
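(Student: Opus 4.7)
The plan is to directly mirror the proof of Theorem~\ref{thm:cycliclower}, exploiting the structural parallel between the Cyclic ACF's hash-selector cycling and the Swapping ACF's slot indexing. The key structural observation is that in the Swapping ACF, if there is an element $x_i$ stored in one of the $k$ bins $h^\ell_\beta(q)$ satisfying $h^f(q,\sigma) = h^f(x_i,\sigma)$ for \emph{every} slot index $\sigma \in \{0,\ldots,b-1\}$, then $q$ is a false positive under every configuration reachable by the swapping fixing procedure: no matter which slot of its bin $x_i$ is swapped to, its stored fingerprint matches that of $q$, and the filter returns \present{}. I call such a $q$ \emph{permanently bad}; this is the direct analog of the Cyclic ACF case in which the hash-selector state space $\{0,\ldots,2^s-1\}$ is exhausted.

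The adversary's strategy is then analogous to the one for Theorem~\ref{thm:cycliclower}: sample candidates $q_1,q_2,\ldots,q_T$ independently and uniformly at random from the universe, with $T = \Theta(1/\epsilon^b)$; for each candidate, query it a constant multiple of $b$ times; and if all of these queries return \present{}, set $\hat q$ to be that candidate and halt. The total query count is $\Theta(b/\epsilon^b)$, matching the theorem statement. The analysis then has two parts. A counting argument shows that the probability a uniformly random $q$ is permanently bad is $\Theta(\epsilon^b)$: the expected number of elements $x_i \in S$ residing across the $k$ bins $h^\ell_\beta(q)$ is $\Theta(kb) = \Theta(b)$ (using $N = \Theta(n)$ and $k = O(1)$), and for each such $x_i$ the event that all $b$ fingerprint equalities hold has probability $\epsilon^b$. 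By independence across the $T$ candidates, some $q^*$ among them is permanently bad with probability $\Omega(1)$, and a permanently bad candidate is guaranteed to pass the detection test and to be a false positive when finally submitted.

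The main obstacle is that, unlike the deterministic cycling of hash-selector bits in the Cyclic ACF, the Swapping ACF performs a random walk over the slots of a bin, so passing $b$ consecutive false positives does not by itself certify that a candidate is permanently bad: a candidate with only $b-1$ bad slots could survive the test by chance. I would handle this by increasing the per-candidate query budget to $cb$ for a sufficiently large constant $c$ and invoking a coupon-collector-style argument. Since each false positive moves $x_i$ to a uniformly random other slot of its bin, after $cb$ consecutive false positives the random walk over $b$ slots has, with probability at least $1/2$, visited every slot of the bin; the fact that every visit produced a collision then forces $h^f(q,\sigma) = h^f(x_i,\sigma)$ to hold in every one of the $b$ slots, making $q$ permanently bad and thus guaranteed to be a false positive on the adversary's final submission. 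This preserves the asymptotic query count $\Theta(b/\epsilon^b)$ and, combined with the counting argument above, yields a sustained false positive rate of $\Omega(1)$, just as in Theorem~\ref{thm:cycliclower}.
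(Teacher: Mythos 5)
Your overall strategy is the right one and matches what the paper intends (the paper gives no separate proof, stating only that the argument is ``essentially identical'' to Theorem~\ref{thm:cycliclower}); your notion of a ``permanently bad'' query --- one colliding with some $x_i$ under $h^f(\cdot,\sigma)$ for every slot $\sigma$, which is correct because the swapping fix never moves $x_i$ out of its bin --- is exactly the right analogue of exhausting the hash-selector values, and you correctly flag the one place where the analogy breaks: the Swapping ACF's fix is a random walk over slots rather than a deterministic cycle, so consecutive false positives do not certify permanent badness the way they do in the Cyclic case.

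However, your resolution of that issue has a genuine gap. The claim that ``after $cb$ consecutive false positives the random walk has, with probability at least $1/2$, visited every slot'' commits a conditioning error: the walk only continues while queries are false positives, so conditioning on $cb$ consecutive false positives biases the walk toward \emph{never} visiting a good slot. Concretely, a ``near-miss'' candidate for which $x_i$ has exactly $b-1$ bad slots passes your test with probability roughly $(1-\tfrac{1}{b-1})^{cb}=e^{-\Theta(c)}$, a constant independent of $\epsilon$, and conditioned on passing, the probability it is permanently bad is $0$, not $\ge 1/2$. This matters because among your $T=\Theta(1/\epsilon^b)$ candidates the expected number of near-misses exceeds the expected number of permanently bad candidates by a factor of roughly $b^{\,\Theta(1)}/\epsilon$, so the first candidate to pass the test is overwhelmingly likely to be a near-miss; your argument that $\hat q$ is then ``guaranteed to be a false positive'' fails. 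The theorem is still recoverable from your setup, but it needs a different final step: for example, show that conditioned on passing the test, a candidate whose colliding element has $g$ good slots ends (after the last fix) in a bad slot with probability $(b-1-g)/(b-1)$, that candidates with $g\ge 2$ pass the test with probability too small to dominate, and hence that the first test-passer --- whether permanently bad or a near-miss with $g=1$ --- is a false positive on submission with probability $\Omega(1)$ for $b\ge 3$ (with $b=2$ handled separately, since there the swap is deterministic and the Cyclic-style monotonicity argument applies directly). As written, the certification step does not go through.
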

\fi

One might think that hashing elements to another bucket (as is done in the Cuckooing ACF) is sufficient to make a filter adaptive.  
The reason Theorem~\ref{thm:cycliclower} gives such a strong lower bound for the Cyclic ACF is that when we move an element to the next fingerprint, it is still a false positive with probability $\epsilon$.  A constant number of these movements still leaves a significant probability that a false positive is not yet fixed.  In contrast, when a colliding element is moved in the Cuckooing ACF, it still collides with the query with probability only $\epsilon/n$---this seems low enough that almost all queries are successfully fixed after only a single movement.  

Nonetheless, the adversary can use a birthday attack to find a small set of elements that cannot all be simultaneously fixed.

We obtain lower bounds for a fairly broad class of filters, where the total total information stored (i.e. hash index plus location plus fingerprint) for each element is at most $\log(n/\epsilon) + O(1)$ bits.  This stands in contrast to the Broom Filter of Bender et al.~\cite{BenderFaGo18}, which is adaptive and which stores an \emph{average} of $\log(n/\epsilon) + O(1)$ bits---in short, this proof shows that the nonuniformity of hash lengths in~\cite{BenderFaGo18} is crucial to achieving adaptivity.

\begin{definition}
  \label{def:detkadaptive}
  A \emph{deterministic $k$-adaptive filter} $\mathcal F$ on $n$ elements with false positive rate $\epsilon$ is a filter satisfying the following:
  \begin{itemize}
  \item $\mathcal{F}$ has access to $k$ uniform random hash functions $h_0,\ldots,h_{k-1}$.  Each hash has length at most $\log (N/\epsilon)$, for some $N = O(n/k)$ with $N \geq n/k$. 
  \item For every configuration $C$ of $\mathcal{F}$, each $x_i\in S$ is stored using at least one hash $h_{C[i]}(x)$, $0\leq C[i] \leq k-1$.  
  \item The filter answers \textnormal{\present{}} to a query $q$ on configuration $C$ if there exists an $x_i$ such that $h_{C[i]}(q) = h_{C[i]}(x_i)$.  Otherwise, it answers \textnormal{\absent{}}.
  \item  On a false positive $q$, $\mathcal F$ updates $C$ to a new configuration $C'$ in round-robin order.  In particular, if a query $q$ collides with an element $x_i\in S$ stored using $h_{\beta}$, then $x_i$ is stored in $C'$ using $h_{\beta'}$ satisfying $\beta' = \beta+ 1 \pmod k$. 
\end{itemize}
\end{definition}

By setting each hash $h_i$ in Definition~\ref{def:detkadaptive} so that for any $i\in\{0,\ldots,k-1\}$ and $x\in U$, $h_i(x)$ is the concatenation of $h_i^\ell(x)$ and $h^f(x,i)$, 
the Cuckooing ACF is a deterministic $k$-adaptive filter. 
By setting $h_{(i,\alpha)}(x)$ to be the concatenation of $h_i^{\ell}(x)$ and $h^f(x,\alpha)$, the Cyclic ACF is a deterministic $k2^s$-adaptive filter.\footnote{Strictly speaking the Cyclic ACF does not satisfy Definition~\ref{def:detkadaptive} because the hashes are not all independent.  However, this only increases the ability of an adversary to find false positives.}

The round-robin ordering requirement stands out as being a bit artificial, but our proof can fairly easily be generalized to handle other deterministic methods to update the configuration.

\begin{theorem}
  \label{thm:detlower}
  There exists an adversarial strategy making $O(n)$ queries such that, for 
  any deterministic $k$-adaptive filter $\mathcal{F}$
  with $k < \log n/6\log \log n$ and $\epsilon > 1/(n^{1/k})$,
  the adversary wins the adaptivity game with probability $\Omega(1)$.  
\end{theorem}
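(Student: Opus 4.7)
The plan is to have the adversary execute a birthday-style attack to find a ``trap'' element $x^* \in S$ whose $k$ hash values each coincide with the corresponding hash of some previously-queried element, and then to cycle the filter's configuration to turn this trap into a guaranteed false positive. Granting the adversary knowledge of the hash functions and $S$ (which only strengthens the lower bound), the strategy proceeds in three phases: (i) pick a pool $P$ of $m = \Theta(n)$ random elements of $U$ and query each one; (ii) locate $x^* \in S$ together with queries $\hat{q}_0,\ldots,\hat{q}_{k-1} \in P$ satisfying $h_\beta(\hat{q}_\beta) = h_\beta(x^*)$ for every $\beta \in \{0,\ldots,k-1\}$; and (iii) re-query $\hat{q}_0, \hat{q}_1, \ldots, \hat{q}_{k-1}$ in this order, then declare $\hat{q} := \hat{q}_0$.

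The correctness of phase (iii) rests on a clean deterministic observation: because a collision between $\hat{q}_\beta$ and $x^*$ can only occur when $C[x^*] = \beta$ at the moment of the query, $C[x^*]$ advances by exactly one at each such collision. If $C[x^*] = \alpha$ at the start of phase (iii), then querying $\hat{q}_0, \ldots, \hat{q}_{\alpha - 1}$ leaves $C[x^*]$ unchanged, and each subsequent $\hat{q}_\beta$ (for $\beta = \alpha, \ldots, k - 1$) increments $C[x^*]$ by one. After all $k$ queries, $C[x^*] = \alpha + (k - \alpha) \equiv 0 \pmod{k}$, so $\hat{q}_0$ collides with $x^*$ under $h_0$ in the final configuration and is therefore a false positive. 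I need to rule out ``spurious'' collisions $h_{\beta'}(\hat{q}_\beta) = h_{\beta'}(x^*)$ for $\beta' \ne \beta$; conditioning on the good structure, the remaining hash coordinates of each $\hat{q}_\beta$ are uniform, so a union bound over the $O(k^2)$ such pairs yields a spurious-collision probability of $O(k^3 \epsilon/n)$, which is $o(1)$ under the hypothesis $k < \log n/(6 \log \log n)$.

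The main obstacle is proving that the good structure sought in phase (ii) exists with constant probability. For $x \in S$ let $X_x$ indicate that some query in $P$ collides with $x$ under each of the $k$ hashes; since each hash maps uniformly to $N/\epsilon$ values,
\begin{equation*}
  \Pr[X_x = 1] = \left(1 - (1 - \epsilon/N)^m\right)^k.
\end{equation*}
Picking $m = \Theta(N/(\epsilon n^{1/k}))$ makes $\E[\sum_x X_x] = \Omega(1)$, and since $N = \Theta(n/k)$ and $\epsilon > n^{-1/k}$, this still gives $m = O(n)$. To boost from positive expectation to constant probability I will apply the second moment method (Paley--Zygmund). The correlation between $X_x$ and $X_{x'}$ is driven by hash coincidences $h_\beta(x) = h_\beta(x')$, each occurring with probability $\epsilon/N$ per coordinate; a careful case analysis should show $\E[X_x X_{x'}] = (1 + o(1)) \Pr[X_x = 1]^2$ as long as $m \ll N/\epsilon$, which the hypothesis on $k$ ensures. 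The result is $\Pr[\sum_x X_x \ge 1] = \Omega(1)$. Combining all phases, the adversary makes $m + k = O(n)$ queries and wins with probability $\Omega(1)$.
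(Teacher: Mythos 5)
Your Phase (iii) cycling argument is sound, and your Phase (i)/(ii) probability calculation correctly identifies the right pool size $m = \Theta(N/(\epsilon n^{1/k}))$ and the right target structure (an $x^*\in S$ hit on all $k$ hashes by queried elements --- essentially what the paper calls a mutually unfixable set). But there is a genuine gap at the heart of the proposal: the step ``granting the adversary knowledge of the hash functions and $S$ \ldots only strengthens the lower bound'' is backwards. In the adaptivity game the adversary's \emph{only} feedback is whether each query was a false positive; the hash functions are the filter's private randomness. Giving the adversary more information makes its task easier, so you would be proving a statement about a strictly more powerful adversary, which does not imply the theorem. (Indeed, an adversary who knows the hashes and the internal state can typically compute a false positive for any space-efficient filter outright, so the relaxed statement is nearly vacuous.) Phase (ii) as written --- ``locate $x^*$ together with $\hat q_0,\ldots,\hat q_{k-1}$ satisfying $h_\beta(\hat q_\beta)=h_\beta(x^*)$'' --- is therefore not an operation the adversary can perform, and this identification step is precisely the hard part of the theorem.

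The paper's proof is devoted almost entirely to simulating your Phase (ii) using false-positive feedback alone: it queries all of $Q$ for $k$ preliminary rounds so that every element of $S$ \emph{without} a mutually unfixable set gets pushed to a hash where it no longer collides with $Q$, then uses $k$ testing rounds to isolate a set $Q_d$ that is exactly a union of $O(1)$ mutually unfixable sets, each of size $O(k)$. Since the adversary still does not know which $k$-subset of $Q_d$ is the right one, nor the correct correspondence $\beta\mapsto\hat q_\beta$ (your permutation), it brute-forces over all $\binom{|Q_d|}{k}k!$ choices --- and this is exactly where the hypothesis $k<\log n/(6\log\log n)$ is needed to keep the query count at $O(n)$. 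To repair your argument you would need to replace Phase (ii) with such a feedback-driven identification procedure; without it, the proof does not establish the theorem in the stated model. Your second-moment/Paley--Zygmund route for showing the structure exists with constant probability is a legitimate alternative to the paper's range-counting argument (Lemmas~\ref{lem:sizeofrange}--\ref{lem:probmutunfixset}), but that is a secondary point.
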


The rest of this section proves Theorem~\ref{thm:detlower}.

\paragraph{Querying to Find a Mutually Unfixable Set.}
The proof of 
Theorem~\ref{thm:detlower} begins
with the adversary searching for a structure that ``blocks'' the filter, preventing it from fixing a false positive.

Consider a stored element $x_i \in S$, and fix a filter $\mathcal{F}$ with $k$ hash functions $h_0,\ldots h_{k-1}$.  A set of queries $K$ is called \defn{mutually unfixable for $x_i$} if
\begin{itemize}
\item for all $\beta\in \{0,\ldots,k-1\}$, there exists a $q'\in K$ 
  such that $h_\beta(x_i) = h_{\beta}(q')$, and 
\item  for all $q'\in K$ there exists a $\beta,\in \{0,\ldots,k-1\}$ such that $h_\beta(x_i) = h_{\beta}(q')$.  
\end{itemize}

The goal of our adversary is to find a mutually unfixable subset of the queries, as in any filter configuration, at least one element in a mutually unfixable set is a false positive.   

The adversary begins by 
choosing a set $Q$
of $(1 + 1/k)N/(\epsilon n^{1/k})$
queries selected uniformly at random from $U$.  
We show 
that if $Q$ is this size, then with constant probability $Q$ will contain $\Theta(1)$ mutually unfixable sets, each of size $O(k)$.  

The adversary then queries members of $Q$ for $2k$ rounds; any query that is a false positive during the second set of $k$ rounds is stored in a set $Q_d$.  
We show that $Q_d$ will be the union of some mutually unfixable subsets of $Q$.

Finally, the adversary repeatedly selects $k$ elements from $Q_d$ and randomly selects a permutation $P$ on these elements.  The adversary queries these elements in order, twice.  We show 
that, over $O(n)$ total queries, the adversary will (with constant probability) pick $k$ elements corresponding to a mutually unfixable set, and query them in an order such that each is a false positive every time it is queried.  With this strategy, the adversary can find a false positive $\hat{q}$ with constant probability.  

\iffull

  For each $\beta\in \{0,\ldots, k-1\}$, let $R_Q(\beta)$ be the set of values hashed to by $Q$ via $h_\beta$.  In other words, $y\in R_Q(\beta)$ when there exists $q\in Q$ 
  such that $h_{\beta}(q) = y$.   We begin by showing that for all $\beta$, $|R_Q(\beta)|$ is very close to $|Q|$ with constant probability.

\begin{lemma}
  \label{lem:sizeofrange}
  Let $Q$ be any set of elements of size $(1 + 1/k)N/(\epsilon n^{1/k})$, and assume $k \leq \log n/6\log\log n$.
  Then with probability at least $1/2$, $\min_{1\leq \beta<k} |R_Q(\beta)| > |Q|/(1 + 1/k)$.

\end{lemma}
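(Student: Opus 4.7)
The plan is to reduce the lemma to a standard balls-into-bins calculation performed separately for each hash $h_\beta$, and then union bound over $\beta$. By Definition~\ref{def:detkadaptive}, each $h_\beta$ has range of size at most $N/\epsilon$, and since $h_\beta$ is uniformly random the multiset $\{h_\beta(q) : q\in Q\}$ behaves like $|Q|$ balls thrown independently into $N/\epsilon$ bins. First I would establish the deterministic inequality
\[
|Q|-|R_Q(\beta)| \;\le\; X_\beta, \qquad X_\beta := \bigl|\{\{q,q'\}\subseteq Q : h_\beta(q)=h_\beta(q')\}\bigr|,
\]
which holds because a bin containing $m$ images ``loses'' $m-1$ to the left-hand side while contributing $\binom{m}{2}\ge m-1$ colliding pairs to $X_\beta$. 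By linearity of expectation, $\E[X_\beta] \le \binom{|Q|}{2}\cdot\epsilon/N \le |Q|^2\epsilon/(2N)$.

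Next I would apply Markov's inequality at the right threshold. Note that $|Q|/(1+1/k) = |Q| - |Q|/(k+1)$, so the event $|R_Q(\beta)|\le |Q|/(1+1/k)$ implies $X_\beta\ge |Q|/(k+1)$. Substituting $|Q|=(1+1/k)N/(\epsilon n^{1/k})$,
\[
\Pr\!\left[X_\beta \ge \frac{|Q|}{k+1}\right] \;\le\; \frac{(k+1)\,\E[X_\beta]}{|Q|} \;\le\; \frac{k(1+1/k)^2}{2\,n^{1/k}}.
\]
A union bound over the $k-1$ values of $\beta$ in $\{1,\ldots,k-1\}$ then bounds the failure probability by $(k-1)(k+1)^2/(2k\,n^{1/k}) \le (k+1)^2/(2\,n^{1/k})$.

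The final step is the arithmetic that converts the hypothesis $k \le \log n/(6\log\log n)$ into a probability below $1/2$: this gives $n^{1/k} \ge 2^{6\log\log n} = (\log n)^6$, while $(k+1)^2 \le (\log n)^2$, so the failure probability is at most $1/(2(\log n)^4)$, comfortably below $1/2$ once $n$ is large enough. I do not expect any serious obstacle here; the loosest step is using Markov on $X_\beta$ rather than a higher-moment concentration bound, and using the pair count to upper bound the multi-way collision loss. Both slacknesses are dwarfed by the $(\log n)^4$ cushion supplied by the hypothesis on $k$. If the lemma ever needed to be pushed to larger $k$, I would strengthen Markov with a second-moment bound on $X_\beta$, but this refinement is unnecessary in the stated regime.
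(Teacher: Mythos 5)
Your proof is correct and follows the same skeleton as the paper's: bound the loss $|Q|-|R_Q(\beta)|$ by a collision count, compute its expectation (which is $\Theta(\epsilon|Q|^2/N)$ either way), apply a tail bound at the threshold $|Q|/(k+1)$, and union bound over $\beta$. The one genuine difference is the concentration tool. The paper counts \emph{redundant elements} (queries that collide with at least one other query under $h_\beta$) and applies a Chernoff bound, getting failure probability $O(k/n)$; you count \emph{colliding pairs} and apply Markov, getting failure probability $O(k^2/n^{1/k})$, which the hypothesis $k\le \log n/6\log\log n$ still comfortably drives below $1/2$ via $n^{1/k}\ge(\log n)^6$. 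Your route is slightly cleaner: the paper's redundancy indicators are not independent Bernoulli trials, so its Chernoff application implicitly relies on stochastic domination, whereas Markov on the pair count needs only linearity of expectation; the price is a polynomially weaker (but here irrelevant) tail. One shared caveat: both arguments take the per-pair collision probability to be at most $\epsilon/N$, which requires the hash range to be at least $N/\epsilon$, while Definition~\ref{def:detkadaptive} only says the hash length is \emph{at most} $\log(N/\epsilon)$; this is an imprecision inherited from the paper rather than a gap you introduced.
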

\begin{proof}
  We say that a $q\in Q$ is 
  \defn{redundant} for a hash $h_\beta$ if there is a $q'\in Q\setminus\{q\}$ 
  such that $h_{\beta}(q) = h_{\beta}(q')$.  Then $|Q|-|R_Q(\beta)|$ is upper bounded by the number of redundant $q$ for $h_\beta$.

  Via a union bound, 
  the probability of a given query $q\in Q$ 
  hashing under a given $h_{\beta}$ to one of the $\leq k|Q|$ values hashed to by another $q'\in Q\setminus\{q\}$ is at most 
  $\epsilon |Q|/N$.  
  Thus, we can upper bound the number of redundant $q$ for $h_{\beta}$ by the sum of $|Q|$ Bernoulli trials, each of which succeeds with probability $\epsilon|Q|/N$.  
  Then the total number of redundant $(q,h_{\beta})$ in expectation is  $\epsilon |Q|^2/N$.
  Using Chernoff bounds~\cite[Exercise 4.7]{MitzenmacherUpfal17}, since  $\epsilon |Q|^2/N \geq N/n^{1/k} = \Omega(\log n)$, the probability that there are more than $2\epsilon |Q|^2/N$ redundant pairs is at most $1/n$.  Taking the union bound over all $\beta\in \{0,\ldots,k-1\}$, all $R(\beta)$ have at most $2\epsilon |Q|^2/N$ redundant pairs with probability $k/n$.

  Plugging into our bound for $|Q| - |R_Q(\beta)|$, we obtain that 
  $|R_Q(\beta)| > |Q|(1 - 2\epsilon |Q|/N)$ for all $\beta$ with probability $\geq 1 - k/n \geq 1/2$. (We weaken this bound to simplify.) Since 
  $k \leq \log n/6\log\log n$ implies $n^{1/k} > 2(k+1)/(1 + 1/k)$, we have $|Q| < N/(2\epsilon(k+1))$; and therefore $1 - 2\epsilon|Q|/N > 1/(1 + 1/k)$.
\end{proof}

  We observe that there exists a mutually unfixable set $K\subseteq Q$ for some $x_i\in S$ if and only if, for all $\beta\in\{0,\ldots, k-1\}$, element $x_i$ hashes to an element of $R_Q(\beta)$ under $h_{\beta}$.
  Let $X_Q$ consist of all $x_j\in S$ such that there is a $K^j\subseteq Q$ that is mutually unfixable for $x_j$.
  That is, $X_Q = \{x_j\in S ~|~ h_\beta(x_j)\in R_Q(\beta) \text{ for all } \beta\in \{0,\ldots, k-1\}\}$. 
  We begin by showing that $X_Q$ is nonempty with constant probability. 

\begin{lemma}
  \label{lem:probmutunfixset}
  For any deterministic $k$-adaptive 
  filter $\mathcal F$ storing a set of elements $S$ of size $n$, and for
  any set of $(1 + 1/k)N/(\epsilon n^{1/k})$ query elements $Q$,
  $Q$ contains a  $K\subseteq Q$ such that $K$ is mutually unfixable for some $x_j\in S$. 
  with probability $> (1 - 1/e)/2$.  
\end{lemma}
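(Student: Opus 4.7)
The plan is to condition on the event $E$ guaranteed by Lemma~\ref{lem:sizeofrange} (that $|R_Q(\beta)| > |Q|/(1+1/k) = N/(\epsilon n^{1/k})$ for all $\beta$), and then show that given $E$, the set $X_Q$ is nonempty with probability at least $1-1/e$. Combining the two bounds then yields the desired $(1-1/e)/2$.

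To bound $\Pr[|X_Q|\ge 1 \mid E]$, I would fix an arbitrary $x_j\in S$ and estimate the probability it lies in $X_Q$. For each hash $h_\beta$ the range is of size $N/\epsilon$, so conditioned on $E$ we have $\Pr[h_\beta(x_j)\in R_Q(\beta)\mid E]\ge (N/(\epsilon n^{1/k}))/(N/\epsilon) = 1/n^{1/k}$. Because the hashes $h_0,\dots,h_{k-1}$ are independent uniform random functions, and because (after discarding the negligible-probability event $Q\cap S\neq\emptyset$) the values $\{h_\beta(x_j)\}_\beta$ are independent of the set of hash values used to define $R_Q(\beta)$, the $k$ events ``$h_\beta(x_j)\in R_Q(\beta)$'' over $\beta\in\{0,\ldots,k-1\}$ are independent even after conditioning on $E$. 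Thus $\Pr[x_j\in X_Q\mid E]\ge (1/n^{1/k})^k = 1/n$.

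Using independence across the $n$ distinct elements of $S$ (again relying on the independence of $h_\beta(x_j)$ from $R_Q(\beta)$ for $x_j\notin Q$), the events ``$x_j\in X_Q$'' are mutually independent given the realization of the hashes on $Q$. Hence
\[
  \Pr[|X_Q|=0 \mid E] \le (1-1/n)^n \le 1/e,
\]
so $\Pr[|X_Q|\ge 1\mid E]\ge 1-1/e$. Multiplying by $\Pr[E]\ge 1/2$ yields $\Pr[|X_Q|\ge 1]\ge (1-1/e)/2$. The lemma then follows from the observation noted just before its statement: whenever $x_j\in X_Q$, one can pick, for each $\beta$, a query $q_\beta^{(j)}\in Q$ witnessing $h_\beta(x_j)\in R_Q(\beta)$, and the set $K=\{q_\beta^{(j)}\}_\beta$ is mutually unfixable for $x_j$ by construction.

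The main subtle point, which is where care is needed, is the independence justification. Conditioning on $E$ is a statement about the hashes of elements in $Q$ only, so it does not affect the marginal distribution of $h_\beta(x_j)$ for $x_j\notin Q$; this is what allows us to multiply probabilities across $\beta$ and across $x_j$. The other minor issue to handle is the unlikely event that $Q\cap S\neq\emptyset$: since $Q$ is chosen uniformly at random and $|S|=n\ll |U|$, this event occurs with probability $o(1)$ and can be absorbed into the constants, or alternatively the adversary can resample to ensure disjointness without affecting the distribution of the relevant hash values.
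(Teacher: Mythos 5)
Your proposal is correct and follows essentially the same route as the paper's proof: condition on the event of Lemma~\ref{lem:sizeofrange} (probability at least $1/2$), deduce that each $x_j\in S$ lies in $X_Q$ with probability at least $\left(\epsilon|R_Q(\beta)|/ (N(1+1/k))\cdot(1+1/k)\right)^k \ge 1/n$, and use independence across the $n$ elements of $S$ to bound the failure probability by $(1-1/n)^n \le 1/e$. If anything, you are more explicit than the paper about why the conditioning on $E$ (which concerns only the hashes of $Q$) leaves the hashes of elements of $S\setminus Q$ independent, and about discarding the negligible event $Q\cap S\neq\emptyset$.
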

\begin{proof} 
  By independence of the hashes, the probability that a given $x$ has a mutually unfixable set is $\prod_{\beta=1}^k (\epsilon |R_Q(\beta)|/N)$.   By Lemma~\ref{lem:sizeofrange}, 
  with probability $1- 1/n$, this is at least $\left(\frac{\epsilon |Q|}{N(1 + 1/k)}\right)^k$.  Substituting for $|Q|$, this means that each $x_i\in S$ has a mutually unfixable set with probability at least $1/2$. 

  Thus, if Lemma~\ref{lem:sizeofrange} is satisfied, the probability that \emph{no} $x_i$ has a mutually unfixable set is $(1 - 1/n)^n \leq 1/e$.   Therefore, $Q$ has a mutually unfixable set for some $x$ with probability at least $(1/2)(1 - 1/e)$.
\end{proof}

The next lemma upper bounds $|X_Q|$; with the above we have $|X_Q| = \Theta(1)$ with constant probability.

\begin{lemma}
  \label{lem:smallnummutunfixsets}
For any deterministic $k$-adaptive 
  filter $\mathcal F$ storing a set of elements $S$ of size $n$, and
  any set of $(1 + 1/k)N/(\epsilon n^{1/k})$ query elements $Q$, let $X_Q$ be the elements of $S$ with a mutually unfixable subset of $Q$.  
  Then with constant probability, $|X_Q| = O(1)$.  
 
\end{lemma}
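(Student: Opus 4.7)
The plan is to bound $\E[|X_Q|]$ by linearity of expectation, and then apply Markov's inequality. For any fixed $x_j \in S$, the event $x_j \in X_Q$ requires that $h_\beta(x_j) \in R_Q(\beta)$ for every $\beta \in \{0,\ldots,k-1\}$. Since the $k$ hash functions are independent, and $|R_Q(\beta)| \leq |Q|$ deterministically (this is the direction that doesn't require Lemma~\ref{lem:sizeofrange}), for each $\beta$ the probability that $h_\beta(x_j) \in R_Q(\beta)$ is at most $\epsilon |R_Q(\beta)|/N \leq \epsilon |Q|/N$, where I use that the hash value is uniform over a range of size $N/\epsilon$. Multiplying over the $k$ independent hashes gives $\Pr[x_j \in X_Q] \leq (\epsilon|Q|/N)^k$.

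Substituting $|Q| = (1+1/k)N/(\epsilon n^{1/k})$ yields
\[
    \Pr[x_j \in X_Q] \leq \left(\frac{1+1/k}{n^{1/k}}\right)^k = \frac{(1+1/k)^k}{n} \leq \frac{e}{n}.
\]
Summing over all $n$ elements of $S$ by linearity of expectation gives $\E[|X_Q|] \leq e$. Markov's inequality then implies that for any constant $c > 1$, $\Pr[|X_Q| \geq ce] \leq 1/c$, so $|X_Q| = O(1)$ with constant probability.

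To get a statement compatible with Lemma~\ref{lem:probmutunfixset}, one can union bound the event $|X_Q| \leq ce$ (which holds with probability at least $1 - 1/c$ for a suitable constant $c$) with the event that $X_Q$ is nonempty (which holds with probability at least $(1-1/e)/2$), giving that both events occur simultaneously with constant probability. I do not expect any serious obstacle here: the argument is just a first-moment calculation, since the upper bound $|R_Q(\beta)| \leq |Q|$ is deterministic and lets us sidestep the more delicate analysis that Lemma~\ref{lem:sizeofrange} needed for the lower bound on $|R_Q(\beta)|$.
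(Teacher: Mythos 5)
Your proposal is correct and follows essentially the same route as the paper's proof: both use the deterministic bound $|R_Q(\beta)| \leq |Q|$ to get $\Pr[x_j \in X_Q] \leq (\epsilon|Q|/N)^k \leq e/n$, sum to $\E[|X_Q|] \leq e$, and finish with Markov's inequality. The only difference is presentational—you spell out the substitution of $|Q|$ and the union bound with nonemptiness explicitly, which the paper leaves implicit.
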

\begin{proof}
  We have $|R_Q(\beta)| \leq |Q|$ for all $0\leq\beta\leq k-1$ by definition; thus, the probability that a given $x$ has a mutually unfixable subset is $(\epsilon|R_Q|/N)^k \leq e/n$.  Thus, $e$ elements $x_i\in S$ have mutually unfixable subsets of $Q$ in expectation; by Markov's inequality there are $O(1)$ elements in $S$ with mutually unfixable subsets with constant probability.  Each maximal mutually unfixable set must be unfixable for some $x_i\in S$, so this proves $u = O(1)$.
\end{proof}

Finally, we bound the size of each mutually unfixable set.

  \begin{lemma}
    \label{lem:sizemutunfixablesets}
    If $k < \log n/6\log\log n$, then 
    with constant probability, for all mutually unfixable $K\subseteq Q$, $k < |K| = O(k)$. 
  \end{lemma}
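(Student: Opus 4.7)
The plan is to control $|K|$ by analyzing $K^{\max}(x_i) := \bigcup_{\beta=0}^{k-1} K_\beta(x_i)$, where $K_\beta(x_i) := \{q' \in Q : h_\beta(q') = h_\beta(x_i)\}$ is the set of queries colliding with $x_i$ under hash $h_\beta$. Any mutually unfixable $K$ for $x_i \in X_Q$ is a subset of $K^{\max}(x_i)$ (by property~(2) in the definition) and must contain at least one element of each $K_\beta(x_i)$ (by property~(1)); so $|K| \le |K^{\max}(x_i)|$, and if no $q' \in Q$ belongs to two distinct $K_\beta(x_i)$ for the same $x_i$ then we also get $|K| \ge k$.

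For the upper bound $|K| = O(k)$, I would condition on the query set $Q$ and exploit independence of the hashes $h_\beta$ across $\beta$. Each $|K_\beta(x_i)|$ is then a binomial variable with mean $\mu = |Q|\epsilon/N = (1+1/k)/n^{1/k}$, and the $|K_\beta(x_i)|$ are independent across $\beta$. Conditional on $x_i \in X_Q$, i.e.\ $|K_\beta(x_i)| \ge 1$ for every $\beta$, the zero-truncated binomial has conditional mean $\mu/(1-e^{-\mu}) = 1 + O(\mu) = O(1)$, so $\E[|K^{\max}(x_i)| \mid x_i \in X_Q] = O(k)$. Summing over $x_i$, weighted by $\Pr[x_i \in X_Q]$, and using $\E[|X_Q|] = O(1)$ from the proof of Lemma~\ref{lem:smallnummutunfixsets}, gives $\E\bigl[\sum_{x_i \in X_Q} |K^{\max}(x_i)|\bigr] = O(k)$, and Markov's inequality then forces $\max_{x_i \in X_Q} |K^{\max}(x_i)| = O(k)$ with constant probability.

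For the lower bound, I would show that with high probability no $q' \in Q$ lies in two distinct $K_\beta(x_i)$ for any single $x_i$. For fixed $(q', x_i, \beta, \beta')$ this event has probability $(\epsilon/N)^2$, and a union bound over $q' \in Q$, $x_i \in S$, and the $\binom{k}{2}$ hash pairs yields
\[
|Q| \cdot n \cdot \binom{k}{2} \cdot (\epsilon/N)^2 = O\left(\frac{k^3 \epsilon}{n^{1/k}}\right).
\]
The hypothesis $k < \log n/(6\log\log n)$ implies $n^{1/k} \ge (\log n)^6$, so this bound is $o(1)$. On this event, every $q' \in Q$ is ``charged'' to at most one $\beta$, and so any mutually unfixable $K$ must include a distinct representative for each of the $k$ hashes, giving $|K| \ge k$.

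The main obstacle is managing the conditioning on $x_i \in X_Q$, which naively couples the $|K_\beta(x_i)|$ across $\beta$. The key that unlocks both halves is that, conditional on $Q$, the hashes $h_\beta$ are fully independent; thus the conditional expectations factor cleanly across $\beta$ (enabling the Markov step for the upper bound), and the pairwise collision bound is equally clean (driving the union bound for the lower bound). Combining both with the $O(1)$ bound on $|X_Q|$ from Lemma~\ref{lem:smallnummutunfixsets} yields constant probability of simultaneous success.
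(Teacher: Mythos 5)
Your proof is correct and follows essentially the same route as the paper's: bound the total number of queries colliding with each $x_i\in X_Q$ across all $k$ hashes by $O(k)$ in expectation, apply Markov together with $|X_Q|=O(1)$ from Lemma~\ref{lem:smallnummutunfixsets}, and for the lower bound union-bound over queries that collide with a single $x_i$ under two distinct hashes. Your treatment of the conditioning on $x_i\in X_Q$ via the zero-truncated binomial (exploiting independence across $\beta$) is a cleaner, more explicit version of the paper's ``expose a size-$k$ witness set and count extra collisions'' step, and your union-bound computation $O(k^3\epsilon/n^{1/k})=o(1)$ is the careful form of the paper's bound; note only that, like the paper's own proof, you establish $|K|\geq k$ rather than the strict inequality stated in the lemma.
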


\begin{proof}
  Consider an element $x_i$ such that some $K^i\subseteq\nobreak Q$ of size $|K^i| = k$ is mutually unfixable for $x_i$.  A further element $q\in Q\setminus K^i$ collides with $x_i$ (on some hash) with probability at most $\epsilon/N$; summing over $|Q| < n$ elements we obtain $\epsilon (n/N) < O(k)$ additional elements in expectation.  By Markov's inequality, there are $O(k)$ such elements with constant probability. 
From Lemma~\ref{lem:smallnummutunfixsets}, there are $O(1)$ elements $x\in S$ with mutually unfixable sets with constant probability.  Therefore, this bound applies to all $O(1)$ mutually unfixable sets with constant probability.

  To lower bound the size of these sets, we have that if the size of a  set $K^j$ that is mutually unfixable for some $x_j\in S$ is less than $k$, there exists a $q_1\in K^j$ and two hash indices $\beta_1,\beta_2\in\{0,\ldots,k-1\}$ such that $q_1$ collides with $x_j$ under both $h_{\beta_1}$ and $h_{\beta_2}$.  The probability that there exists such a pair $(q_1,x_j)$ is at most $k^2|Q|/n$, which is much less than a constant since $k < \log n/6\log\log n$.  Thus, $|K| \geq k$ for all mutually unfixable $K$ with constant probability.  
\end{proof}

\subsection{Lower bound for deterministic k-adaptive filters}
\label{sec:detlower}

The adversary begins by selecting a set of 
$(1 + 1/k)N/(\epsilon n^{1/k})$ random queries $Q$.  
The adversary performs $2k$ rounds, where in each round, the adversary queries every element of $Q$ in random order.  We call the first $k$ such rounds the \defn{preliminary rounds}, and we call the second $k$ rounds the \defn{testing rounds}.

Let $Q_d$ be the set of elements that are false positives during any of the testing rounds.  
If $|Q_d| \neq O(k)$, the adversary fails.\footnote{Specifically, for a sufficiently large constant $c$, the adversary fails if $|Q_d| > ck$.  The constant $c$ gives a tradeoff between the lower-order terms in the number of queries performed, and the constant probability that the adversary wins.}

The adversary performs a further $\binom{|Q_d|}{k} k!$ rounds, the \defn{permutation rounds.}  In each permutation round, 
the adversary picks a random subset of $Q_d$ of size $k$, which we call $Q_r$. 
The adversary then picks a random permutation $P$ of $Q_r$.  
The adversary queries all elements in $Q_r$ in the order given by $P$,  then repeats: again querying all elements in $Q_r$ in the order given by $P$.  
If every query in these repeated queries is a false positive, 
the adversary stops, and sets $\hat{q}$ to be the first element in $P$.

We now analyze this adversary.  
The idea is that after the $k$ preliminary rounds, the false positives during the testing rounds will consist of a union of mutually unfixable subsets of $Q$.  We prove this in the following lemma.

\begin{lemma}
  \label{lem:detfps}
  For any $Q_d$ obtained by the adversary, there exists a sequence of $u$ sets $K_1,\ldots, K_u \subseteq Q$ such that 
  for all $\lambda$, $K_\lambda$ is a mutually unfixable set for some element $x_i\in S$, and 
  $Q_d = \bigcup_{\lambda=1}^u K_\lambda$.
\end{lemma}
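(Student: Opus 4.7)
The plan is to show two things: (1) after the $k$ preliminary rounds, every $x_j\in S$ that does \emph{not} admit a mutually unfixable subset of $Q$ is ``trapped'' at a hash index no element of $Q$ collides with, and thus contributes no false positives during the testing rounds; and (2) for every remaining $x_j$ the queries that actually collide with $x_j$ during the testing rounds already form a mutually unfixable subset of $Q$ for $x_j$. Taking the collection of these sets gives the desired $K_1,\ldots,K_u$.

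For step (1), fix $x_j\in S$ and let $I(x_j)=\{\beta:\exists q\in Q,\ h_\beta(q)=h_\beta(x_j)\}$. The round-robin rule forces the sequence of hash indices used for $x_j$ to progress as $\beta\to\beta+1\pmod k$. If $|I(x_j)|<k$, then as soon as $x_j$'s hash index lands on some $\beta^\star\notin I(x_j)$, no query in $Q$ can ever collide with $x_j$ again, so $x_j$ remains at $\beta^\star$ forever. On the other hand, in any round whose starting hash index for $x_j$ lies in $I(x_j)$, the corresponding witness $q\in Q$ is queried during that round, so $x_j$ moves at least once during the round (either $q$ itself causes the move, or some earlier query in the round already did). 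Since each move advances the cyclic pointer by one, after at most $k-1$ moves—hence at most $k-1$ rounds—$x_j$ reaches $\beta^\star$ and becomes trapped. Thus after the $k$ preliminary rounds every such $x_j$ is trapped, and every $q\in Q_d$ must collide with some $x_j$ satisfying $|I(x_j)|=k$, i.e., one that admits a mutually unfixable subset of $Q$.

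For step (2), let $X^\star$ denote the set of $x_j$ collided with during the testing rounds and, for each $x_j\in X^\star$, set
\[
  K_j=\{q\in Q:q\text{ is a false positive colliding with }x_j\text{ during some testing round}\}.
\]
By construction $K_j\subseteq Q_d$ and $Q_d=\bigcup_{x_j\in X^\star}K_j$, so it only remains to verify each $K_j$ is mutually unfixable for $x_j$. One half of the definition is immediate: every $q\in K_j$ collides with $x_j$ under some hash. For the other half, the argument from step (1) applied to $x_j$ with $|I(x_j)|=k$ shows that $x_j$ is never trapped and must move at least once in each of the $k$ testing rounds, for at least $k$ moves total. Because the moves are cyclic, the sequence of pre-move hash indices has the form $\beta,\beta+1,\ldots,\beta+k-1\pmod k$ and hence covers every $\beta'\in\{0,\ldots,k-1\}$. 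The query $q'$ whose false positive triggered the move out of $\beta'$ satisfies $h_{\beta'}(q')=h_{\beta'}(x_j)$ and lies in $K_j$, so $K_j$ contains a witness for every hash index, as required.

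The main obstacle is a bookkeeping issue inside a single round: several queries can collide with $x_j$, so $x_j$ may jump multiple hash indices per round, and the random query order could leave a particular ``witness'' query queried at a moment when $x_j$'s index does not match. The trick that makes the proof go through is to assign witnesses to the \emph{moves} of $x_j$ rather than to individual queries: every move is by definition caused by some false positive in $Q$, and the round-robin pointer guarantees that $\geq k$ such moves sweep out every hash index exactly as needed.
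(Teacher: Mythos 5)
Your proof is correct and takes essentially the same route as the paper's: you first show that any $x_j$ lacking a colliding query under some hash index is driven, within the $k$ preliminary rounds, to an index where no element of $Q$ collides with it and hence contributes nothing to $Q_d$, and you then show that each remaining $x_j$ moves at least once per testing round, so its $\geq k$ round-robin moves sweep every hash index and the testing-round colliders of $x_j$ form a mutually unfixable set. Your bookkeeping (attaching a witness query to each move of $x_j$ rather than to each round) is a slightly more explicit rendering of the paper's ``moved at least $k$ times during the testing rounds'' step, but the decomposition and both halves of the argument coincide with the paper's proof.
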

\begin{proof}
First, we show that every $q\in Q_d$ is a member of a mutually unfixable subset $K$.  
  Consider an element $x_i\in S$ that does not have a mutually unfixable $K^i\subseteq Q$.  
  In each round where $x_i$ collides with a false positive of $Q$, $x_i$ must be moved to the next hash.  After less than $k$ movements, $x_i$ is moved to some hash $h_{\beta}$ where $x_i$ does not collide with any element of $Q$ under $\beta$; that is, $h_\beta(x_i)\notin R_Q(\beta)$.  After this, $x_i$ will not be moved during subsequent queries from $Q$, and will not cause any elements of $Q$ to be a false positive.

  Thus, any $x_i$ without a mutually unfixable $K^i\subseteq Q$ will not collide with any query in the testing rounds.  This means that all $q'\in Q_d$ collide with an $x_j$ that has a mutually unfixable subset $K^i\subseteq Q$.  By definition, $q'\in K^i$.

  Now, consider a maximal mutually unfixable set $K^\ell\subseteq Q$, which is mutually unfixable for some $x_\ell\in S$.  We show that $Q_d$ contains a mutually unfixable $K'\subseteq K^\ell$.  During each round, at least one member of $K^\ell$ collides with $x_\ell$ by definition; thus $x_\ell$ is moved at least $k$ times during the testing rounds.  This means that for each $\beta\in \{0,\ldots, k-1\}$, $x_\ell$ collides with a $q\in Q$ while being stored using hash $h_{\beta}$.  Thus, by definition, the set of elements that collide with $x_\ell$ during the testing rounds form a mutually unfixable $K'\subseteq K^\ell$.  
\end{proof}

With the above, we show that the adversary wins the adaptivity game with constant probability. 

\begin{lemma}
  \label{lem:adversarywins}
  With probability $\Omega(1)$, the adversary finds a $\hat{q}$ which is a false positive during the permutation rounds.  
\end{lemma}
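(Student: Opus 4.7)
The plan is to condition on constant-probability ``good'' events from Lemmas~\ref{lem:probmutunfixset}--\ref{lem:sizemutunfixablesets} and Lemma~\ref{lem:detfps}, extract a size-exactly-$k$ mutually unfixable subset of $Q_d$, and then argue via a standard occupancy calculation that the $\binom{|Q_d|}{k}\,k!$ sampled $(Q_r,P)$ pairs suffice to hit a specific ``good'' pair. More precisely, I first condition on the intersection of: (i) $Q$ containing a mutually unfixable set $K^\ell\subseteq Q$ for some $x_\ell\in S$; (ii) $|X_Q|=O(1)$, so $|Q_d|=O(k)$ by Lemma~\ref{lem:detfps}; and (iii) every mutually unfixable $K\subseteq Q$ having $k\le|K|=O(k)$, with each $q\in K$ colliding with its witness element under only one hash index (this one-hash-per-query property is essentially the lower-bound side of the argument in Lemma~\ref{lem:sizemutunfixablesets}).

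Next, inside $K^\ell$ I pick, for each hash index $\beta\in\{0,\ldots,k-1\}$, one query $q^{(\beta)}\in K^\ell$ with $h_\beta(q^{(\beta)})=h_\beta(x_\ell)$. Distinctness follows from the one-hash-per-query property, so $K'=\{q^{(0)},\ldots,q^{(k-1)}\}\subseteq Q_d$ is a size-exactly-$k$ mutually unfixable set for $x_\ell$. The core claim is then a \textbf{cycling lemma}: if at the start of some permutation round $x_\ell$ is stored using hash $h_c$, and the adversary queries $K'$ in the specific order $P^\star=(q^{(c)},q^{(c+1)},\ldots,q^{(c-1)})$ (indices mod $k$), then every query is a false positive via $x_\ell$, each one triggers the round-robin advancement, and after $k$ queries $x_\ell$ is back at $h_c$. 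The second pass is therefore identical, producing $2k$ consecutive false positives; immediately afterwards $\hat q=q^{(c)}$ is still a false positive, so the adversary wins that round.

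To convert this into the claimed $\Omega(1)$ win probability, observe that for any filter state entering a round, the cycling lemma exhibits \emph{exactly one} good permutation. Hence a single random $(Q_r,P)$ sample is good with probability at least $p:=1/\bigl(\binom{|Q_d|}{k}\,k!\bigr)$, independently of the filter state evolved across prior rounds. Over the $1/p$ permutation rounds the adversary performs, the probability of at least one good sample is $\ge 1-(1-p)^{1/p}\ge 1-1/e$. Multiplying by the constant probability of the conditioning event gives the stated $\Omega(1)$ bound.

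The delicate step is the cycling lemma. Because Definition~\ref{def:detkadaptive} makes the movement of $x_\ell$ a deterministic function of which queries collide with it, the only things that could spoil the argument are: (a) some $q^{(\beta)}\in K'$ accidentally colliding with $x_\ell$ under a second hash, which would over-advance $x_\ell$; or (b) queries in the round colliding with a different witness $x_{\ell'}\in X_Q$ and feeding back into later rounds through a shared state. Both are controlled by the one-hash-per-query property in the conditioning event, which I would formalize as a standalone sub-lemma---mirroring the lower-bound direction of Lemma~\ref{lem:sizemutunfixablesets} but applied across all witnesses in $X_Q$---before executing the final probability calculation.
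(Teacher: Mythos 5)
Your proposal is correct and follows essentially the same route as the paper: condition on the structural lemmas so that $Q_d$ is a union of $O(1)$ mutually unfixable sets of size $O(k)$, isolate a size-$k$ mutually unfixable set inside $Q_d$, observe that exactly one cyclic permutation (aligned with the current round-robin position of the witness $x_\ell$) forces $2k$ consecutive false positives, and conclude via the $1/\bigl(\binom{|Q_d|}{k}k!\bigr)$ per-round hit probability over that many rounds, with the ``one-hash-per-query'' event ruling out interference from other witnesses exactly as the paper's final $k^2|Q_d|^2\epsilon^2/n^2 = o(1)$ bound does. The only cosmetic difference is that you construct the size-exactly-$k$ set explicitly by choosing one representative per hash index, which the paper asserts more briefly.
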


\begin{proof}
	We assume that Lemma~\ref{lem:smallnummutunfixsets} is satisfied; this happens with probability $\Omega(1)$.  Combining Lemma~\ref{lem:smallnummutunfixsets} with Lemma~\ref{lem:detfps}, $Q_d$ consists of $u = O(1)$ mutually unfixable sets $K_1',\ldots K_u'$, each of size $O(k)$.  

  Let $K^i$ be a mutually unfixable set of an element $x_i\in S$ such that $K^i \subseteq Q_d$ and $|K^i| = k$.
  We analyze the probability that, in a given permutation round, the adversary chooses $Q_r = K^i$, and chooses a permutation $P$ that causes each query to be a false positive; this means that the adversary selects a $\hat{q}$ with probability $\Omega(1)$. Then we show that if the adversary selects a $\hat{q}$, it is a false positive with probability $\Omega(1)$.

  Fix a permutation round, and let $\beta_i$ be the hash index of $x_i$ at the beginning of this permutation round.
  Let $P_{K^i} = (q_1, q_2, \ldots, q_k)$ be a permutation on $K^i$ such that for each $q_{j}$, $h_{\beta_i + j\pmod k}(q_{j}) = h_{\beta_i + j \pmod k}(x_i)$. 

  If $P_{K^i}$ is selected, then all queries to $P_{K^i}$ will be false positives, and the adversary wins the adaptivity game with probability $1$.  We select $Q_r$ to consist of the members of $K^i$ with probability $1/\binom{|Q_d|}{k}$, and we query them in order of $P_{K^x}$ with probability $1/(k!)$.  Repeating this process for $\binom{|Q_d|}{k}k!$ rounds gives constant probability of success.

  Now, we show that, with constant probability, the adversary only sets $\hat{q}$ for a permutation $P' = (q_1',q_2',\ldots,q_k')$ if $P'$ satisfies, for some $x_\ell$ with hash index $\beta_\ell$,  $h_{\beta_\ell + j\pmod k}(q_{j}) = h_{\beta_\ell + j\pmod k} (x_\ell)$ for all $1\leq j \leq k'$.
  Recall that the adversary sets $\hat{q}$ when the $j$th query to $P$ is a false positive for all $j$; thus for each $q_j$ there exists an $x_j\in S$ and a $0\leq \beta_j < k$ such that $h_{\beta_j}(q_j) = h_{\beta_j}(x_j)$.  
  If $x_j = x_\ell$ for all $j$ then we are done.

  Otherwise, let $\hat{X}\subseteq S\setminus \{x_\ell\}$ be the elements of $S$ (other than $x_i$) that collide with an element of $Q_r$.  If all  $x_j\in \hat{X}$ only collide with a member of $Q_r$ under a single hash function $h_{\beta}$ (i.e. there do not exist $q_1,q_2\in Q_r$ and distinct $\beta_1,\beta_2\in \{0,\ldots,k-1\}$ with $h_{\beta_1}(x_j) = h_{\beta_1}(q_1)$ and $h_{\beta_2}({x_j}) = h_{\beta_2}(q_2)$), then ${x_j}$ will only be a false positive once.  Thus, all collisions in the repeated queries will collide with $x_i$, and again we are done.

  The probability that there exists an ${x_\ell}\in S$ and distinct $\beta_1,\beta_2\in \{0,\ldots, k-1\}$ with $h_{\beta_1}(x_\ell) = h_{\beta_1} (q_1)$ and $h_{\beta_2}(x_\ell) = h_{\beta_2} (q_2)$ for some $q_1, q_2\in Q_d$ is at most $k^2|Q_d|^2\frac{\epsilon^2}{n^2} = o(1)$.
\end{proof}

The number of queries required for this adversary is, for some constant $c$ depending on the maximum size of $Q_d$, $O(n^{1-1/k}/\epsilon  + c^k k!)$; this is $O(n)$ since $k < \log n/6\log \log n$ and $\epsilon > 1/(n^{1/k})$.  This completes the proof of Theorem~\ref{thm:detlower}.

\fi

\bibliographystyle{plain}
\bibliography{bloom}
\newpage
\appendix
\section{Symbol Table}
\label{sec:symboltable}

The following symbols are used throughout the paper: 

\begin{center}
\begin{tabular}{|c|c|}
\hline
Symbol & Usage \\
\hline
\hline
$S$ & The set of elements stored in the filter \\
\hline
$x_i$ & The $i$th element stored in the filter\\
\hline
$\epsilon$ & The static false positive rate; used as a parameter in adaptive filters\\
\hline
$U$ & Universe of possible elements; all queries and stored elements are from $U$\\
\hline
$Q$ & The set of elements queried in a query sequence\\
\hline
$h^\ell_i$ & The $i$th location hash function\\
\hline
$h^f_i$ & The $i$th fingerprint hash function\\
\hline
$f$ & The size of the fingerprint\\
\hline
$k$ & The number of hash tables\\
\hline
$b$ & The number of slots per bin\\
\hline
$N$ & The number of bins in a hash table; $N = \gamma n/(bk)$\\
\hline
$\beta_i$ & The hash index of $x_i$ (i.e. the table storing $x_i$)\\
\hline
$B(x_i)$ & The bin storing the fingerprint of $x_i$;  $B(x_i) = h_{\beta_i}^\ell(x_i)$\\
\hline
$C$ & A configuration of a filter, defined using the hash index of each stored element $x_i$\\
\hline
$C[i]$ & The hash index of $x_i$ in configuration $C$\\
\hline
$s$ & The number of hash selector bits in a Cyclic ACF\\
\hline
$\alpha$ & The value stored using the hash selector bits for a given slot\\
\hline
$A$ & The set of elements queried during a given experiment\\
\hline
\end{tabular}
\end{center}

The following symbols are used in the proof that the Cuckooing ACF is adaptive in Section~\ref{sec:upper}:
\begin{center}
\begin{tabular}{|c|c|}
\hline
Symbol & Usage \\
\hline
\hline
$C_0$ & The configuration before any query is performed\\
\hline
$C_i$ & The configuration after query $q_i$\\
\hline
$k_i$ & The number of elements moved when fixing $q_i$ on configuration $C_{i-1}$\\
\hline
$x_{i_1},x_{i_2},\ldots x_{i_{k_i}}$ & The sequence of elements moved when fixing $q_i$ on configuration $C_{i-1}$\\
\hline
	$B(i,C)$ & The bin storing $x_i$ in configuration $C$; $B(i,C) = h^{\ell}_{C[i]}(x_i)$\\
\hline
	$B'(i,C)$ & The alternate bin that can store $x_i$ in configuration $C$; $B'(i,C) = h^{\ell}_{1-C[i]}(x_i)$\\
\hline
$F_0$ & The set of initial false positives\\
\hline
$k_i^0$ & The number of elements moved when fixing $q_i$ on configuration $C_0$\\
\hline
$x_{i_1'},x_{i_2'},\ldots x_{i_{k_i^0}'}$ & The sequence of elements moved when fixing $q_i$ on configuration $C_0$\\
\hline
$\Phi(t)$ & The potential after query $t$\\
\hline
\end{tabular}
\end{center}

The following symbols are used in the lower bound (the proof of Theorem~\ref{thm:detlower}) in Section~\ref{sec:lower}:
\begin{center}
\begin{tabular}{|c|c|}
\hline
Symbol & Usage \\
\hline
\hline
$\hat{q}$ & A special query that the adversary names as a likely false positive\\
\hline
$R_Q(\beta)$ & The set of values that can be obtained by applying hash $h_\beta$ to any element in $Q$\\
\hline
$X_Q$ & The set of all $x_j$ such that there exists a $K^j\subseteq Q$ that is mutually unfixable for $x_j$ \\
\hline
$Q_d$ & The set of queries  that are false positives during any of the testing rounds\\
\hline
$Q_r$ & A random subset of $Q_d$ of sized $k$; this subset is queried during a permutation round\\
\hline
$P$ & A permutation applied to some $Q_r$ during a permutation round\\
\hline
$P_{K^i}$ & A permutation on a mutually unfixable set of $x_i$ with certain hash collision properties\\
\hline
$\hat{X}$ &  The elements of $S$ other than a fixed $x_i$ that collide with any element of $Q_r$\\
\hline
\end{tabular}
\end{center}

\newpage
\section{Experiments On Network Traces}
\label{sec:networkexperiments}

\begin{figure}[ht]
  \includegraphics[width=.46\textwidth]{plots/8bitChicagoA.pdf}
  \includegraphics[width=.46\textwidth]{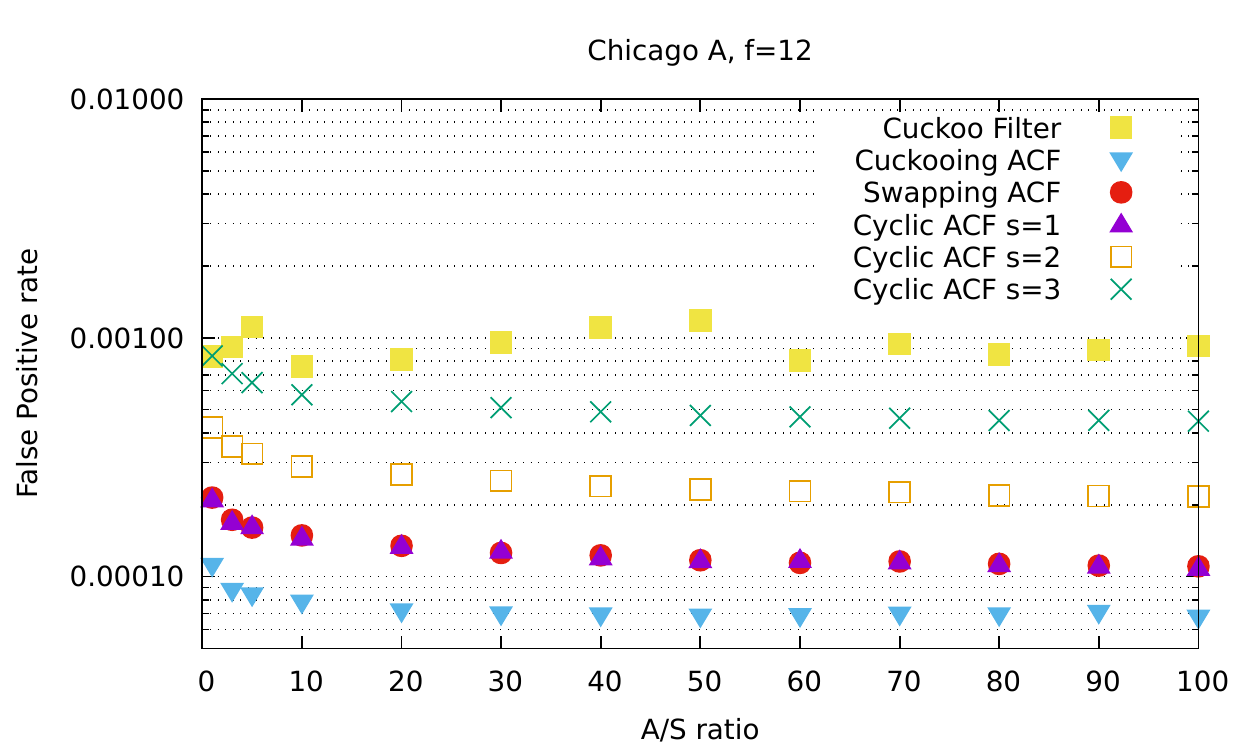}
  \includegraphics[width=.46\textwidth]{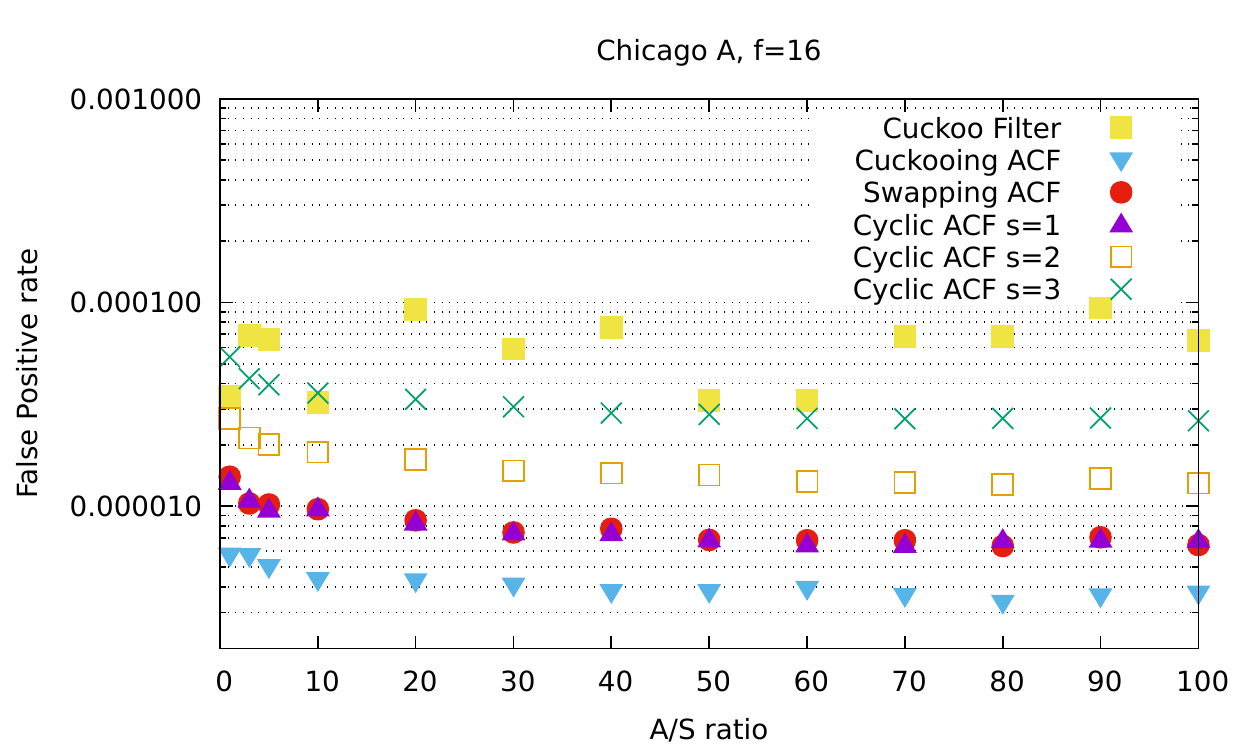}
  \caption{The false positive rate incurred for each filter on the Chicago A dataset, with $8$, $12$, and $16$ fingerprint bits.}
\end{figure}

\begin{figure}[ht]
  \includegraphics[width=.46\textwidth]{plots/8bitChicagoB.pdf}
  \includegraphics[width=.46\textwidth]{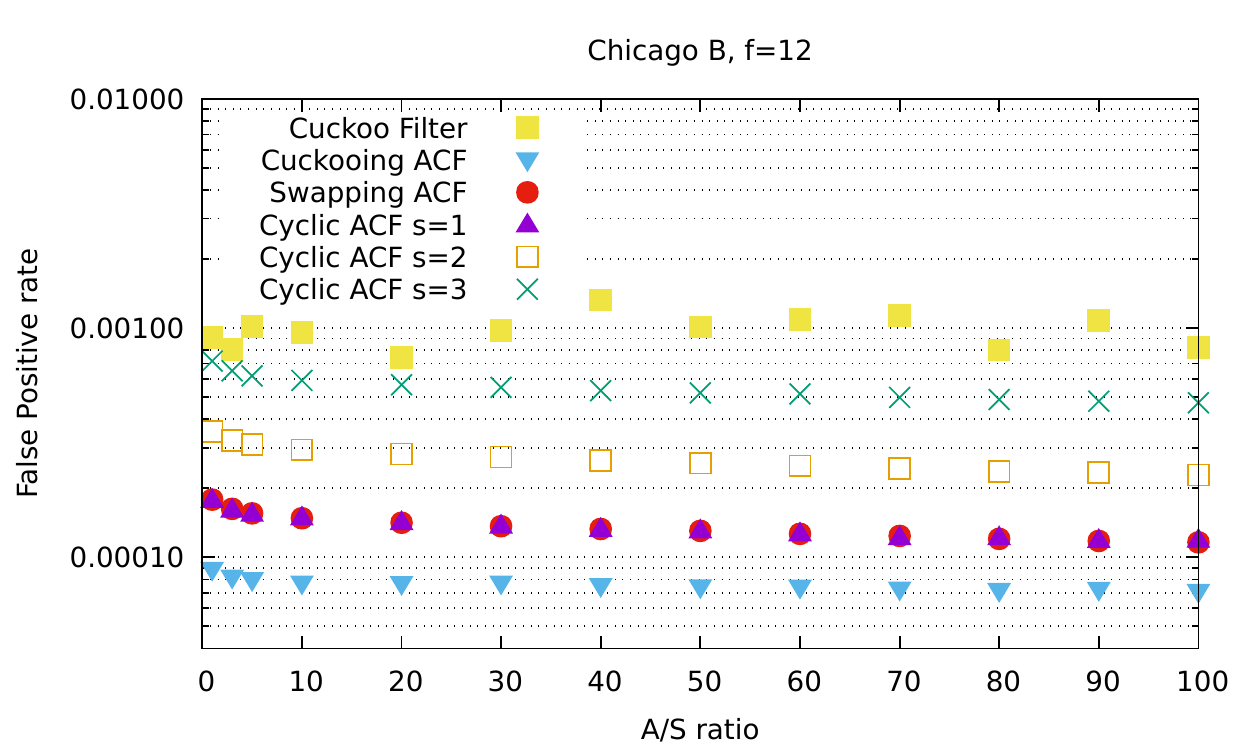}
  \includegraphics[width=.46\textwidth]{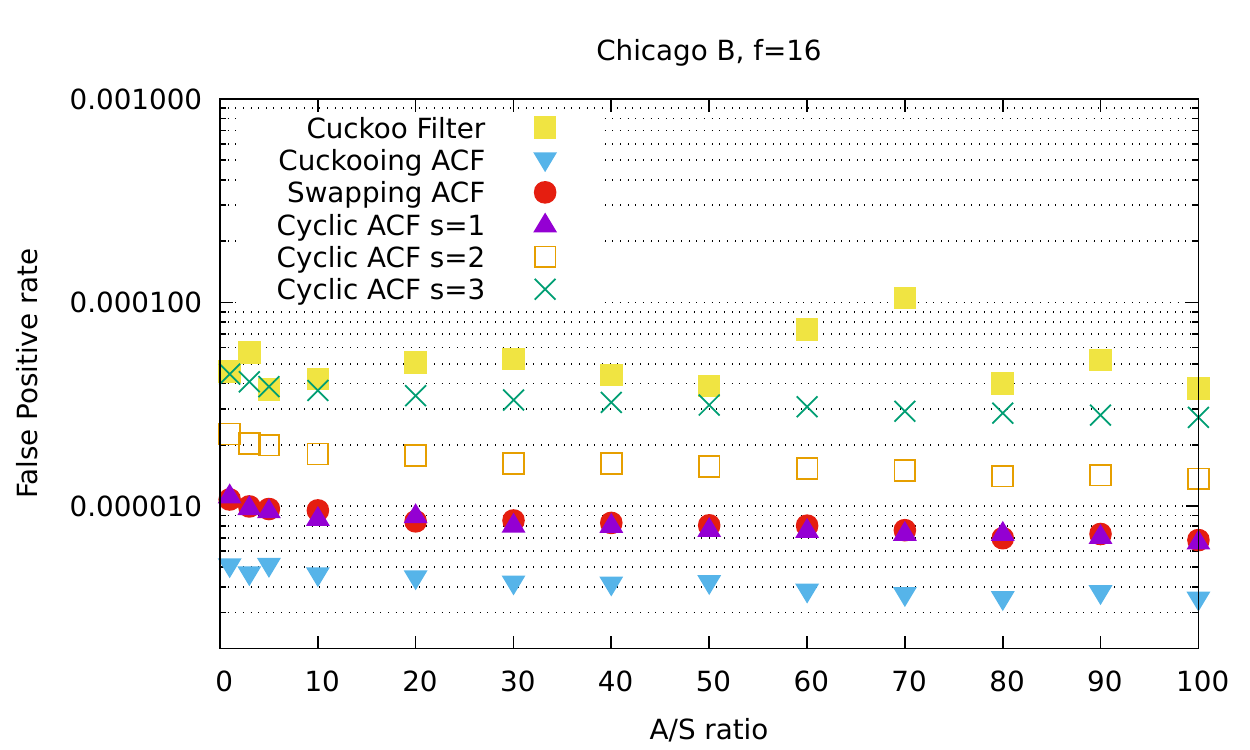}
  \caption{The false positive rate incurred for each filter on the Chicago B dataset, with $8$, $12$, and $16$ fingerprint bits.}
  \vspace{-.2in}
\end{figure}

\begin{figure}[ht]
  \includegraphics[width=.46\textwidth]{plots/8bitSanjose.pdf}
  \includegraphics[width=.46\textwidth]{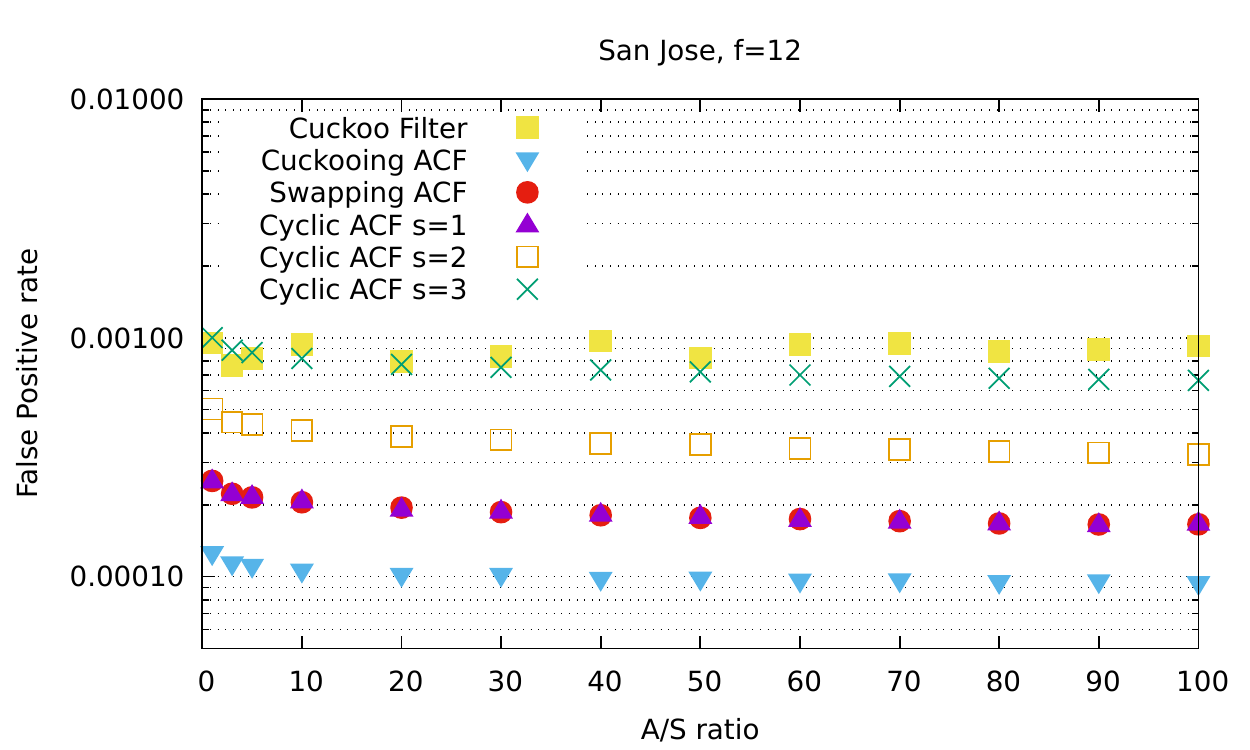}
  \includegraphics[width=.46\textwidth]{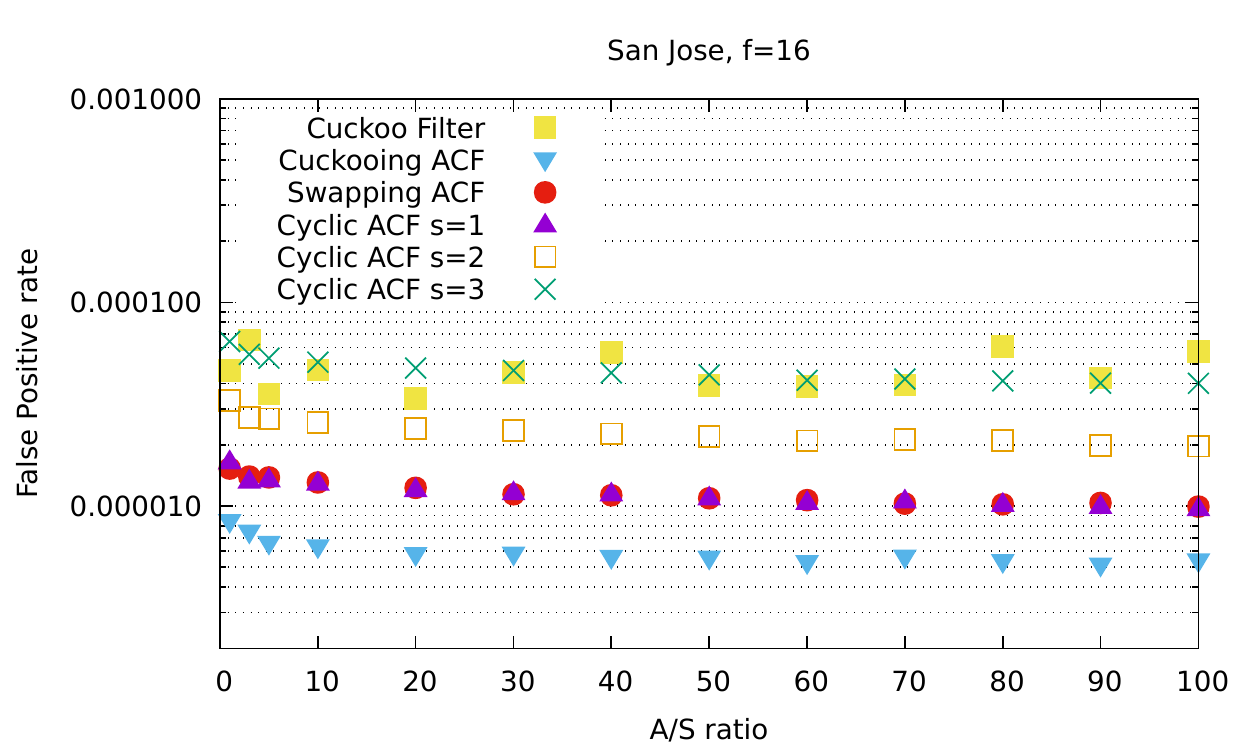}
  \caption{The false positive rate incurred for each filter on the San Jose dataset, with $8$, $12$, and $16$ fingerprint bits.}
\end{figure}
 
\end{document}